\DeclareMathOperator*{\argmax}{arg\,max}
\newcommand{\mmd}{\text{MMD}}
\newtheorem{lemma}{\textbf{Lemma}}
\newtheorem{theorem}{\textbf{Theorem}}
\newtheorem{remark}{\textbf{Remark}}
\newcommand{\nn}{\nonumber}
\newcommand{\mE}{\mathbb{E}}
\newcommand{\cH}{\mathcal{H}}
\newcommand{\cP}{\mathcal{P}}
\DeclareMathAlphabet{\matheuf}{U}{euf}{m}{n}
\begin{document}

%\vspace*{-2cm}

\begin{center}
  \baselineskip 1.3ex {\Large \bf Universal Outlying sequence detection For Continuous Observations
}
\\
\vspace{0.15in} Yuheng Bu$^{\star }$ \qquad Shaofeng Zou $^{\dagger}$  \qquad Yingbin Liang $^{\dagger}$ \qquad Venugopal V. Veeravalli $^{\star}$\\
\vspace{0.15in}
$^{\star}$ University of Illinois at Urbana-Champaign\\  $^{\dagger}$ Syracuse University\\
\vspace{0.15in} Email: \small{ \texttt{bu3@illinois.edu, szou02@syr.edu, yliang06@syr.edu, vvv@illinois.edu}}

%{Shaofeng Zou and Yingbin Liang are with the Department of Electrical
%Engineering and Computer Science, Syracuse University, Syracuse, NY 13244 USA (email: \{szou02,yliang06\}@syr.edu).
%H. Vincent Poor is with the Department of Electrical Engineering, Princeton University, Princeton, NJ 08544 USA (email: poor@princeton.edu).
%}
\end{center}

%\maketitle
%
\begin{abstract}
  The following detection problem is studied, in which there are $M$ sequences of samples out of which one outlier sequence needs to be detected. Each typical sequence contains $n$ independent and identically distributed (i.i.d.) \emph{continuous} observations from a known distribution $\pi$, and the outlier sequence contains $n$ i.i.d. observations from an outlier distribution $\mu$, which is distinct from $\pi$, but otherwise unknown. A universal test based on KL divergence is built to approximate the maximum likelihood test, with known $\pi$ and unknown $\mu$. A data-dependent partitions based KL divergence estimator is employed. Such a KL divergence estimator is further shown to converge to its true value exponentially fast when the density ratio satisfies $0<K_1\leq \frac{d\mu}{d\pi}\leq K_2$, where $K_1$ and $K_2$ are positive constants, and this further implies that the test is exponentially consistent. The performance of the test is compared with that of a recently introduced test for this problem based on the machine learning approach of maximum mean discrepancy (MMD). We identify regimes in which the KL divergence based test is better than the MMD based test.
\end{abstract}
%
%\begin{keywords}
%Continuous distribution, Kullback-Leibler divergence, maximum mean discrepancy, outlier hypothesis testing, universal exponential consistency
%\end{keywords}

\section{Introduction}

\label{sec:intro}
In this paper, we study problem, in which there are $M$ sequences of samples out of which one outlier sequence needs to be detected. Each typical sequence consists of $n$ independent and identically (i.i.d.) \emph{continuous} observations drawn from a known distribution $\pi$, whereas the outlier sequence consists of $n$ i.i.d. samples drawn from a distribution $\mu$, which is distinct from $\pi$, but otherwise unknown. The goal is to design a test to detect the outlier sequence.

%We are interested in a universal setting, in which $\pi$ is known and $\mu$ is unknown. Furthermore, differently from \cite{Li2013}, in which $\mu$ and $\pi$ are assumed to be \emph{discrete}, we consider the case that $\mu$ and $\pi$ are \emph{continuous}.

The study of such a model is very useful in many applications \cite{vvv2014}. For example, in cognitive wireless networks, signals follow different distributions depending on whether the channel is busy or vacant.  The goal in such a network is to identify vacant channels out of busy channels based on their corresponding signals in order to utilize the vacant channels for improving spectral efficiency. Such a problem was studied in \cite{Lai2011} and \cite{Tajer2013} under the assumption that both $\mu$ and $\pi$ are known. Other applications include anomaly detection in large data sets \cite{bolton2002statistical,Chan2009}, event detection and environment monitoring in sensor networks \cite{chamberland2007wireless}, understanding of visual search in humans and animals \cite{vaidhiyan2012active}, and optimal search and target tracking \cite{stone1975theory}.

The outlying sequence detection problem with \emph{discrete} $\mu$ and $\pi$ was studied in \cite{Li2013}. A universal test based on generalized likelihood ratio test was proposed, and was shown to be  exponentially consistent. The error exponent was further shown to be optimal as the number of sequences goes to infinity. The test utilizes empirical distributions to estimate $\mu$ and $\pi$, and is therefore applicable only for the case where $\mu$ and $\pi$ are discrete.

In this paper, we study the case where distributions $\mu$ and $\pi$ are \emph{continuous} and $\mu$ is \emph{unknown}. We construct a Kullback-Leibler (KL) divergence  based test, and further show that this test is \emph{exponentially consistent}.

%The major challenges to solve this problem lie in: (1) lack of information about $\mu$, and even a clean set of training data; (2) accurately learning continuous distributions with limited samples for further outlying sequence detection; (3) designing low complexity tests with continuous distributions; and (4) building universal tests and further guaranteeing the exponential consistency.

%For the outlying sequence detection problem, the maximum likelihood test is optimal when both $\mu$ and $\pi$ are known.
%The KL divergence based test is built to approximate the test statistic of the maximum likelihood test under the assumption that  the typical distribution $\pi$ is known, and  the outlier distribution $\mu$ is unknown.
%The intuition behind such an idea is that as the number of samples goes to infinity, both the test statistic of the maximum likelihood test and the divergence estimator converge to $D(\mu\|\pi)$ if the sequence is the outlier.
%The KL divergence estimator is based on the idea of data-dependent partitions \cite{wang2005divergence}.
Our exploration of the problem starts with the case in which both $\mu$ and $\pi$ are known, and the maximum likelihood test is optimal. An interesting observation is that the test statistic of the optimal test converges to $D(\mu \| \pi)$ as the sample size goes to infinity if the sequence is the outlier. This motivates the use of a KL divergence estimator to approximate the test statistic for the case when $\mu$ is unknown. We apply a divergence estimator based on the idea of data-dependent partitions \cite{wang2005divergence}, which was shown to be consistent. Our first contribution here is to show that such an estimator converges exponentially fast to its true value when the density ratio satisfies the boundedness condition: $0< K_1\leq \frac{d\mu}{d\pi}\leq K_2$, where $K_1$ and $K_2$ are positive constants. We further design a KL divergence based test using such an estimator and show that the test is exponentially consistent.

%Hence, our KL divergence based test provides a good approximation of the maximum likelihood test with exponential convergence rate. The performance of such a test is very promising.

%Such a KL divergence estimator estimates the KL divergence between $\pi$ and the distribution that generates the sequence.

%We also design the MMD-based test  for the case with both $\mu$ and $\pi$ unknown \cite{zou2014unsupervised}. The idea  is to map probability distributions into a reproducing kernel Hilbert space (RKHS), such that distinguishing the two distributions can be carried out by distinguishing their corresponding embeddings in the RKHS. Such an approach is justified by the fact that the mapping of distributions into RKHS is injective for certain kernels, e.g., Gaussian and Laplacian kernels\cite{Srip2010}. Since RKHS naturally carries a distance metric, mean embeddings of distributions can be compared easily by evaluating their distance in RKHS. Such a metric is referred to as MMD \cite{Gretton2012}. MMD-based test has several advantages: (1) MMD can be easily estimated based on samples with exponential convergence rate; (2) it is particularly advantageous for vector distributions; (3) MMD-based test does not require estimation of distributions as intermediate step, but directly estimate the distance of distributions to build test, thus avoiding error propagation.

The rest of the paper is organized as follows. In Section \ref{sec:format}, we describe the problem formulation. In Section \ref{sec:test}, we present the KL divergence based test and establish its exponential consistency. In Section \ref{sec:MMD}, we review the maximum mean discrepancy (MMD) based test. In Section \ref{sec:numerical}, we provide a numerical comparison of our KL divergence based test and the MMD based test. All the detailed proofs is shown in the appendix.

\section{Problem Model}

\label{sec:format}
Throughout the paper, random variables are denoted
by capital letters, and their realizations are denoted
by the corresponding lower-case letters. All logarithms are with respect to the natural base.

We study an outlier detection problem, in which there are in total $M$ data
sequences denoted by $Y^{(i)}$ for $1 \le i \le M$. Each data sequence $Y^{(i)}$ consists of $n$ i.i.d. samples
$Y^{(i)}_1, \dots , Y^{(i)}_n$ drawn from either a typical distribution $\pi$ or an outlier distribution $\mu$, where $\pi$ and $\mu$ are \emph{continuous}, i.e.,   defined on $(\mathbb{R},\mathcal{B}_{\mathbb{R}})$, and $\mu \ne \pi$. We use the notation $\mathbf{y}^{(i)} = (y^{(i)}_1, \dots , y^{(i)}_n)$, where $y^{(i)}_k \in \mathbb{R}$ denotes the $k$-th observation of the $i$-th sequence. We assume that there is exactly one outlier among $M$ sequences. If the $i$-th sequence is the outlier, the joint distribution of all the observations is given by
\begin{equation*}
 p_i(y^{Mn})= p_i(\mathbf{y}^{(1)},\dots,\mathbf{y}^{(M)})=\prod_{k=1}^n\Big\{ \mu(y_k^{(i)}) \prod_{j\ne i } \pi(y_k^{(j)}) \Big\}.
\end{equation*}

We are interested in the scenario in which the outlier distributions $\mu$ is unknown a priori, but we know the typical distribution $\pi$ exactly. This is reasonable because in practical scenarios, systems typically start without outliers and it is not difficult to collect sufficient information about $\pi$.

Our goal is to build a distribution-free test to detect the outlier sequence generated by $\mu$. The the test can be captured by a universal rule
$\delta : \pi \times \mathbb{R}^{Mn} \to {1,\dots,M}$, which must not depend on $\mu$.

The maximum error probability, which is a function of  the detector and $(\mu, \pi)$, is defined as
\begin{equation*}
  e(\delta,\pi,\mu) \triangleq \max_{i=1,\dots,M}  \int_{y^{Mn}: \delta(\pi, y^{Mn})\ne i} p_i( y^{Mn}) dy^{Mn},
\end{equation*}
and the corresponding error exponent is defined as
\begin{equation*}
  \alpha(\delta,\pi,\mu) \triangleq \lim_{n\to \infty}-\frac{1}{n}\log e(\delta,\pi,\mu).
\end{equation*}
A test is said to be \emph{universally consistent} if %the maximum error probability converges to zero as the number of samples goes to infinity, i.e.,
\begin{equation*}
  \lim_{n\to \infty}e(\delta,\pi,\mu) =0,
\end{equation*}
for any $\mu\ne\pi$. It is said to be \emph{universally exponentially consistent} if %the error exponent for the maximum error probability is strictly positive, i.e.,
\begin{equation*}
  \lim_{n\to \infty} \alpha(\delta,\pi,\mu) >0,
\end{equation*}
for any $\mu\ne\pi$.

\section{KL divergence based test}\label{sec:test}

%In this section, we consider the case that the typical distribution $\pi$ is known, and the outlier distribution $\mu$ is \emph{unknown}.
We first introduce the optimal test when both $\mu$ and $\pi$ are known, which is the maximum likelihood test. We then construct a KL divergence estimator, and prove its exponential consistency. Next, we employ the KL divergence estimator to approximate the test statistics of the optimal test for the outlying sequence detection problem, and construct the KL divergence based test.

\subsection{Optimal test with $\pi$ and $\mu$  known}
If both $\mu$ and $\pi$ are known, the optimal test for the outlying sequence detection problem is  the maximum likelihood  test:
\begin{equation}\label{test1}
    \delta_{\mathrm{ML}}(y^{Mn},\pi,\mu) = \argmax_{1\le i \le M}\  p_i(y^{Mn}).
\end{equation}
By normalizing $p_i(y^{Mn})$ with $\pi (y^{Mn}) $, \eqref{test1} is equivalent to:
\begin{align}
  \delta_{\mathrm{ML}}(y^{Mn},\pi,\mu)   =\argmax_{1\le i \le M}\  \frac{p_i(y^{Mn})}{ \pi (y^{Mn}) } 
      =\argmax_{1\le i \le M} \left\{ \frac{1}{n}\sum_{k=1}^n\log\frac{\mu(y_k^{(i)})}{\pi(y_k^{(i)})}\right\}
    =\argmax_{1\le i \le M}\ L_i \nn.
\end{align}
where
\begin{equation}\label{test11}
  L_i \triangleq   \frac{1}{n}\sum_{k=1}^n\log\frac{\mu(y_k^{(i)})}{\pi(y_k^{(i)})}.
\end{equation}

The following theorem characterizes the error exponent of test \eqref{test1}.
\begin{theorem}\label{thm:optimal} \cite[Theorem 1]{Li2013} Consider the outlying sequence detection problem with both $\mu$ and $\pi$  known. The error exponent for the maximum
likelihood test \eqref{test1} is given by
 $$ \alpha(\delta_{\mathrm{ML}},\pi,\mu) =2B(\pi,\mu),$$
where $B(\pi,\mu)$ is the Bhattacharyya distance between $\mu$ and $\pi$ which is defined as
\begin{equation*}
  B(\pi,\mu) \triangleq -\log \left( \int \mu(y)^{\frac{1}{2}}\pi(y)^{\frac{1}{2}} dy  \right).
\end{equation*}
\end{theorem}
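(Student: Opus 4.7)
The plan is to analyze the error probability of the maximum likelihood test via a pairwise comparison and a Chernoff-type bound, and to show that the exponent of the resulting bound is exactly $2B(\pi,\mu)$. By symmetry, the maximum error probability equals the error probability conditioned on any fixed sequence being the outlier. Suppose sequence $1$ is the outlier. Then the test errs only if there exists some $j\neq 1$ with $L_j\geq L_1$. Applying a union bound I obtain
\begin{equation*}
 e(\delta_{\mathrm{ML}},\pi,\mu)\leq (M-1)\,\Pr\bigl(L_2\geq L_1 \,\big|\, \text{sequence } 1 \text{ is the outlier}\bigr),
\end{equation*}
so that it suffices to establish the exponent $2B(\pi,\mu)$ for the pairwise error and to provide a matching lower bound.

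The second step is the Chernoff bound on the pairwise error. Under the conditional distribution, $Y_k^{(1)}\sim\mu$ and $Y_k^{(2)}\sim\pi$ are independent, so the random variable
\begin{equation*}
 Z_k=\log\frac{\mu(Y_k^{(1)})}{\pi(Y_k^{(1)})}-\log\frac{\mu(Y_k^{(2)})}{\pi(Y_k^{(2)})}
\end{equation*}
is i.i.d. across $k$, and $L_1-L_2=\tfrac{1}{n}\sum_k Z_k$. For any $t\leq 0$, the Chernoff bound gives $\Pr(\sum Z_k\leq 0)\leq \bigl(\mE[e^{tZ_1}]\bigr)^n$, and a direct calculation using the product form of the density yields
\begin{equation*}
 \mE[e^{tZ_1}]=\Bigl(\int \mu(y)^{1+t}\pi(y)^{-t}\,dy\Bigr)\Bigl(\int \pi(y)^{1+t}\mu(y)^{-t}\,dy\Bigr).
\end{equation*}
Choosing $t=-1/2$ makes both integrands equal to $\mu(y)^{1/2}\pi(y)^{1/2}$, so $\mE[e^{-Z_1/2}]=\bigl(\int\sqrt{\mu\pi}\,dy\bigr)^{2}=e^{-2B(\pi,\mu)}$. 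Checking that $t=-1/2$ is actually the optimizer is easy by symmetry: differentiating the cumulant generating function and evaluating at $t=-1/2$ shows the two derivative terms cancel. Combined with the union bound, this gives $\alpha(\delta_{\mathrm{ML}},\pi,\mu)\geq 2B(\pi,\mu)$.

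For the converse, I would invoke the lower-bound half of Cramér's theorem applied to the i.i.d. sum $\sum Z_k$, which asserts that the Chernoff bound is tight for the large-deviation probability $\Pr(\tfrac{1}{n}\sum Z_k\leq 0)$ whenever the moment generating function is finite in a neighborhood of the optimizing $t$. Since at $t=-1/2$ the expectation $\int\sqrt{\mu\pi}\,dy$ is finite (it is at most $1$ by Cauchy--Schwarz), Cramér's theorem delivers the matching asymptotic lower bound $\Pr(L_2\geq L_1)\geq e^{-n(2B(\pi,\mu)+o(1))}$. Since a single pairwise error already implies an overall error, this lower bound transfers directly to $e(\delta_{\mathrm{ML}},\pi,\mu)$, establishing $\alpha(\delta_{\mathrm{ML}},\pi,\mu)\leq 2B(\pi,\mu)$ and completing the theorem.

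The main obstacle I anticipate is the converse side: verifying the regularity conditions (essential smoothness/finiteness of the log-MGF near $t=-1/2$) needed for Cramér's theorem in the continuous setting, since unlike the discrete case in \cite{Li2013} one cannot simply invoke the method of types. If those conditions were to fail for pathological $(\mu,\pi)$, I would fall back on a change-of-measure (Laplace-type tilting) argument at $t=-1/2$ together with a central-limit refinement of the tilted sum, which gives the same $e^{-2nB(\pi,\mu)}$ asymptotics up to polynomial prefactors and hence the same error exponent.
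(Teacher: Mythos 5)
Your proposal is correct and follows essentially the same route as the paper: reduce to the pairwise error $\Pr(L_2\ge L_1)$ via symmetry and a union bound, express it as a large-deviation probability for the i.i.d.\ sum $\sum_k Z_k$, and apply Cram\'er's theorem (Chernoff bound plus matching lower bound), with the tilting parameter $t=-1/2$ identified by the symmetry $\lambda\leftrightarrow 1-\lambda$ of $\int\mu^\lambda\pi^{1-\lambda}$ to yield the exponent $2B(\pi,\mu)$. Your explicit attention to the regularity conditions for the converse half of Cram\'er's theorem is a point the paper glosses over, but it does not change the argument.
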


\begin{proof}
See Appendix \ref{app:thm1}.
\end{proof}
%Contrary to the non-universal setting,  a universal detector has to be designed without the knowledge of $\mu$ and $\pi$. It is not clear whether we can design detectors with $ \alpha(\delta,\pi,\mu) >0$ for any $(\pi,\mu),\mu\ne\pi$. One of our main contributions is that it is possible to design such a detector.

%Actually, we propose two kinds of detectors based on different distance measure, one is based on KL divergence, while the other is based on MMD (Maximum Mean Discrepancy), and both detectors will be said to enjoy \emph{universally exponential consistency}.

Consider $L_i$ defined in \eqref{test11}. If $\mathbf y^{(i)}$ is generated from  $\mu$,
$L_i \to D(\mu||\pi)$  almost surely as $n \to \infty$, by the Law of Large Numbers.
Here, $$D(\mu||\pi) \triangleq \int d\mu\log\frac{d\mu}{d\pi}$$ is the KL divergence between $\mu$ and $\pi$.
Similarly, if $\mathbf y^{(i)}$ is generated from  $\pi$, $ L_j \to -D(\pi||\mu)$  almost surely as $ n \to \infty.$
If $\mathbf y^{(i)}$ is generated from  $\mu$, $L_i$ is an empirical estimate of the KL divergence between $\mu$ and $\pi$. This motivates us to construct a test based on an estimator of KL divergence between $\mu$ and $\pi$, if $\mu$ is unknown.

%It inspires us that we can substitute the test statistic $L_j$ which is optimal in non-universal setting with a universal divergence estimator $\hat{D}_n(Y^{(i)}||\pi)$ when $\mu$ or $\pi$ is unknown.

\subsection{KL divergence estimator}
We introduce a KL divergence estimator of continuous distributions based on data-dependent partitions \cite{wang2005divergence}.

Assume that the distribution $p$ is \emph{unknown} and the distribution $q$ is known, and both $p$ and $q$ are continuous. A sequence of i.i.d. samples $Y \in \mathbb{R}^n$ is generated from $p$. We wish to estimate the KL divergence between  $p$ and  $q$.
We denote the order statistics of $Y$ by $\{Y_{(1)}, Y_{(2)}, \dots, Y_{(n)} \}$ where $Y_{(1)}\le Y_{(2)}\le \dots\le Y_{(n)}$. We further partition the real line  into empirically equiprobable segments as follows:
\begin{equation*}
\begin{split}
    \{I^n_t \}_{t=1,\dots,T_n}=\{(-\infty,Y_{(\ell_n)}],&\ (Y_{(\ell_n)},Y_{(2\ell_n)}], \dots,(Y_{(\ell_n(T_n-1))},\infty)\},
\end{split}
\end{equation*}
where $\ell_n \in \mathbb{N}\le n$ is the number of points in each interval except possibly the last one, and $T_n=\lfloor n/\ell_n \rfloor$ is the number of intervals.
A divergence estimator between the sequence $Y\in \mathbb{R}^n$ and the distribution $\pi$ was proposed in \cite{wang2005divergence}, which is given by
\begin{equation}\label{estimator}
  \hat{D}_n(Y||q) = \sum_{t=1}^{T_n-1}\frac{\ell_n}{n} \log \frac{\ell_n/n }{q(I^n_t)}+ \frac{\epsilon_n}{n} \log \frac{\epsilon_n/n }{q(I^n_{T_n})},
\end{equation}
where $\epsilon_n=(n-\ell_n(T_n-1))$ is the number of points in the last segment.

%If the given sequence $Y$ is i.i.d. generated from an unknown distribution $P$, while distribution $Q$ is known.
The consistency of such an estimator was shown in \cite{wang2005divergence}. Here, we characterize the convergence rate by introducing the following boundedness condition on the density ratio between $p$ and $q$, i.e.,
\begin{equation}\label{bddcondition}
0<K_1 \le  \frac{\mathrm{d}p}{\mathrm{d}q} \le K_2,
\end{equation}
where $K_1$ and $K_2$ are positive constants.
In practice, such a boundedness condition is often satisfied, for example, for truncated Gaussian distributions.

The following theorem characterizes a lower bound on the convergence rate of  estimator  \eqref{estimator}.
\begin{theorem} \label{thm:estimator}  If the density ratio between $p$ and $q$ satisfies \eqref{bddcondition}, and  estimator  \eqref{estimator} is applied with $T_n, \ell_n \to \infty$, as $n\to \infty$, then for $\forall \epsilon >0$,
\begin{equation*}
\begin{split}
\lim_{n\to \infty} -\frac{1}{n} \log\left(\mathbb{P} \left\{ \big|\hat{D}_{n}(Y||q)-D(p||q)\big| > \epsilon \right\}\right)
 \ge \frac{1}{32}\frac{K_1^2}{K_2^2} \epsilon^2.
\end{split}
\end{equation*}
\end{theorem}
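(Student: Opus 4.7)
My plan is to reduce the estimator to a spacings-based functional of i.i.d.\ samples on $[0,1]$ whose density lies in $[K_1,K_2]$, then split the estimation error into bias and fluctuation, and handle the fluctuation by a McDiarmid bounded-differences inequality whose constants are governed by the density bound. The reduction is via the probability integral transform: setting $Z_i = F_q(Y_i)$ with $F_q$ the CDF of $q$, the $Z_i$ are i.i.d.\ on $[0,1]$ with density $r = \frac{dp}{dq}\circ F_q^{-1}$ satisfying $K_1 \le r \le K_2$. KL divergence is invariant under monotone bijections, so $D(p\|q) = \int_0^1 r \log r\, dz$, and since $q(I^n_t)$ is the Lebesgue length of the image of $I^n_t$, the estimator becomes the spacings statistic
\[
\hat D_n = \sum_{t=1}^{T_n-1}\tfrac{\ell_n}{n}\log\tfrac{\ell_n/n}{V_t}+\tfrac{\epsilon_n}{n}\log\tfrac{\epsilon_n/n}{V_{T_n}},\qquad V_t = Z_{(t\ell_n)}-Z_{((t-1)\ell_n)}.
\]

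Next, I would write
\[
\hat D_n - D(p\|q) = \bigl(\hat D_n - \mathbb E\hat D_n\bigr) + \bigl(\mathbb E\hat D_n - D(p\|q)\bigr),
\]
with the second term the bias. The Wang--Kulkarni--Verd\'u consistency theorem (valid under $\ell_n,T_n\to\infty$) drives the bias to zero, so for $n$ sufficiently large the event $\{|\hat D_n-D(p\|q)|>\epsilon\}$ is contained in the fluctuation event $\{|\hat D_n-\mathbb E\hat D_n|>\epsilon/2\}$, and no explicit rate on the bias is required.

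To bound the fluctuation probability I would apply McDiarmid to $\hat D_n(Z_1,\dots,Z_n)$. Under $K_1 \le r \le K_2$, each $V_t$ is a sum of $\ell_n$ individual spacings and (with overwhelming probability) lies in $[\ell_n/(nK_2),\ell_n/(nK_1)]$ up to lower-order corrections. Using $|\log a - \log b|\le|a-b|/\min(a,b)$ together with this lower bound, a single swap $Z_i\to Z_i'$ can be shown to change $\hat D_n$ by at most $4K_2/(nK_1)$. The bounded-differences inequality then yields
\[
\mathbb P\!\left(|\hat D_n - \mathbb E\hat D_n|>\tfrac{\epsilon}{2}\right)\le 2\exp\!\Bigl(-\tfrac{nK_1^2\epsilon^2}{32K_2^2}\Bigr),
\]
and taking $-\tfrac{1}{n}\log$ and letting $n\to\infty$ delivers the claimed exponent.

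The main obstacle is the bounded-differences constant. The density bound forces $V_t$ into its expected range only with high probability, not deterministically, and a swap can in principle perturb the ranks of many order statistics defining the $V_t$'s. The cleanest fix is to condition on a good event $A_n$ where every $V_t$ lies in the desired range, establish bounded differences deterministically on $A_n$ (using in a critical way that $\sum_t V_t = 1$ is preserved by any swap, which prevents the total perturbation from blowing up), and separately bound $\mathbb P(A_n^c)$ by a Chernoff estimate on the Gamma-tail behaviour of the spacings sums. Tracking constants carefully through both estimates is what produces the precise coefficient $\tfrac{1}{32}(K_1/K_2)^2$.
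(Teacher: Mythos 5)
Your overall architecture (reduce to the unit interval, split off a vanishing deterministic bias, concentrate the random part) is reasonable, and the probability-integral-transform reduction is fine. But the core concentration step has a genuine gap: the McDiarmid bounded-differences constant you claim, $4K_2/(nK_1)$, is off by a factor of $T_n$, and this is fatal. Because the partition is \emph{data-dependent}, replacing a single sample $Z_i$ by $Z_i'$ can shift up to $T_n$ of the cell boundaries $Z_{(t\ell_n)}$ (every order statistic whose rank lies between the old and new rank of the swapped point moves by one index, i.e., by one spacing). Even on your good event where each $V_t\in[\ell_n/(nK_2),\,\ell_n/(nK_1)]$ and each spacing is $O(1/n)$, the total perturbation is $\sum_t|\Delta V_t|=\Theta(T_n/n)=\Theta(1/\ell_n)$, and hence
\begin{equation*}
|\Delta \hat D_n|\;\le\;\frac{\ell_n}{n}\sum_t\frac{|\Delta V_t|}{\min_t V_t}\;=\;\Theta\!\left(\frac{K_2}{K_1\,\ell_n}\right),
\end{equation*}
not $\Theta(1/n)$. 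The constraint $\sum_t V_t=1$ forces $\sum_t\Delta V_t=0$ but does not shrink $\sum_t|\Delta V_t|$ (an Abel-summation attempt to exploit the cancellation still leaves a contribution of order $(K_2-K_1)/\ell_n$ unless $K_1=K_2$). With $c_i=\Theta(1/\ell_n)$, McDiarmid gives $\exp(-\Theta(\epsilon^2\ell_n^2/n))$, and since the theorem requires $T_n\to\infty$, i.e., $\ell_n=o(n)$, the normalized exponent $-\frac1n\log\mathbb P$ tends to $0$. So your route proves consistency but not a positive error exponent.

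The paper circumvents exactly this difficulty with a different key lemma: the Lugosi--Nobel uniform concentration inequality over the whole family $\mathcal A_n$ of partitions of $\mathbb R$ into $T_n$ intervals,
\begin{equation*}
\mathbb P\Big\{\sup_{\pi\in\mathcal A_n}\sum_{I\in\pi}|p_n(I)-p(I)|>\epsilon\Big\}\le 4\,\Delta^*_{2n}(\mathcal A_n)\,2^{c(\mathcal A_n)}e^{-n\epsilon^2/32},
\end{equation*}
whose combinatorial prefactors are subexponential precisely because $T_n=o(n)$. Taking the supremum over all interval partitions removes the data-dependence of the partition from the concentration step entirely. The error is then decomposed against the \emph{true} equiprobable partition $\{I_t\}$ (using $p_n(I^n_t)=\ell_n/n=p(I_t)$, the random part collapses to $\sum_t\frac1{T_n}|\log q(I_t)-\log q(I^n_t)|$, controlled via the mean value theorem and the density-ratio bound by $\sum_t|p_n(I^n_t)-p(I^n_t)|$), which is exactly the quantity the lemma controls. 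If you want to complete your proof, you need to replace McDiarmid by such a uniform (VC-type) deviation bound over partitions; as a secondary point, your bias term $\mathbb E\hat D_n-D(p\|q)$ also needs an argument beyond citing a.s.\ consistency (the paper instead uses the deterministic Riemann-sum error of the oracle partition, which avoids expectations altogether).
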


\begin{proof}
See Appendix \ref{app:thm2}.
\end{proof}
%We note that \cite{wang2005divergence} showed the consistency of the estimator based on  data-dependent partitions. But the convergence rate was not characterized.

\begin{remark}
The convergence rate of estimator \eqref{estimator} in Theorem \ref{thm:estimator} is equivalent to
\begin{equation*}
 \big| \hat{D}_{n}(Y||q)-D(p||q)\big| = \mathcal{O}_p(n^{-1/2}),\footnote{$X_n=\mathcal O_p(a_n)$: $\forall \epsilon>0$, $\exists M>0$, $P(|\frac{X_n}{a_n}|>M)<\epsilon, \forall n.$}
\end{equation*}
where $\mathcal{O}_p$ denotes ``bounded in probability '' \cite{nguyen2010estimating}. % As shown in \cite{nguyen2010estimating}, the rate of $n^{-1/2}$ is optimal in the minimax sense.
\end{remark}

\subsection{Test and performance}
In this subsection, we utilize the estimator based on data-dependent partitions to construct our test.
%We assume that the typical distribution $\pi$ is known a priori.
%This is reasonable because in practical scenarios, systems typically start without outliers and it is not difficult to collect information at this stage.

It is clear that if $Y^{(i)}$ is the outlier, then $\hat{D}_{n}(Y^{(i)}||\pi)$ is a good estimator of $D(\mu||\pi)$, which is a positive constant. On the other hand, if $Y^{(j)}$ is a typical sequence, $\hat{D}_{n}(Y^{(j)}||q)$ should be close to $D(\pi||\pi)=0$. Based on this understanding and the convergence guarantee in Theorem \ref{thm:estimator}, we use $\hat{D}_{n}(Y^{(i)}||\pi)$ in place of $L_i$ in \eqref{test11}, and construct the following test for the outlying sequence detection problem:
\begin{equation}\label{test2}
    \delta_{\mathrm{KL}}(y^{Mn}) = \arg \max_{1\le j \le M} \hat{D}_n(Y^{(j)}||\pi).
\end{equation}

The following theorem provides a lower bound on the error exponent of $\delta_{\mathrm{KL}}$, which further implies that $\delta_{\mathrm{KL}}$ is universally exponentially consistent.
\begin{theorem}\label{thm:KL test}
If the density ratio between $\mu$ and $\pi$ satisfies \eqref{bddcondition},
then  $\delta_{\mathrm{KL}}$ defined in \eqref{test2} is  exponentially consistent, and the error exponent is lower bounded as follows,
\begin{equation}
\begin{split}
\alpha(\delta_{\mathrm{KL}},\pi,\mu)\ge \frac{1}{32}\left(\frac{K_1}{K_1+K_2}\right)^2 D^2(\mu||\pi).
\end{split}
\end{equation}
\end{theorem}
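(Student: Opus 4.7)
The plan is to reduce the error analysis to two large-deviation events for the KL divergence estimator, one for the outlier sequence and one for a typical sequence, and then apply Theorem \ref{thm:estimator} to each, tuning a single threshold to balance the resulting exponents.

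First I would fix the true outlier index $i$ (by symmetry, all choices of $i$ yield the same bound, so the max in $e(\delta,\pi,\mu)$ is immaterial). An error under $\delta_{\mathrm{KL}}$ means that some typical sequence receives a larger estimate than the outlier, i.e., there exists $j\neq i$ with $\hat D_n(Y^{(j)}\|\pi)\ge \hat D_n(Y^{(i)}\|\pi)$. I would insert an auxiliary threshold $t\in(0,D(\mu\|\pi))$ and decompose the error event as
\begin{equation*}
\{\delta_{\mathrm{KL}}\neq i\}\subseteq \big\{\hat D_n(Y^{(i)}\|\pi)\le t\big\} \cup \bigcup_{j\neq i}\big\{\hat D_n(Y^{(j)}\|\pi)\ge t\big\}.
\end{equation*}
A union bound over the (finitely many) $M$ sequences then gives $e(\delta_{\mathrm{KL}},\pi,\mu)$ as at most $M$ times the maximum of the two single-sequence large-deviation probabilities.

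Next I would invoke Theorem \ref{thm:estimator} on each piece. For the outlier term, $Y^{(i)}$ is drawn from $\mu$ and the ratio $d\mu/d\pi$ satisfies \eqref{bddcondition} with constants $K_1,K_2$, so with $\epsilon_1=D(\mu\|\pi)-t$ the probability is asymptotically bounded by $\exp\!\bigl(-n\tfrac{1}{32}\tfrac{K_1^2}{K_2^2}\epsilon_1^2\bigr)$. For a typical term, $Y^{(j)}$ is drawn from $\pi$, so $p=q=\pi$ and the boundedness condition holds trivially with $K_1=K_2=1$; Theorem \ref{thm:estimator} then yields a bound $\exp\!\bigl(-n\tfrac{1}{32}\epsilon_2^2\bigr)$ with $\epsilon_2=t$. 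Taking $-\tfrac{1}{n}\log$ and letting $n\to\infty$ gives
\begin{equation*}
\alpha(\delta_{\mathrm{KL}},\pi,\mu)\;\ge\;\min\!\left\{\tfrac{1}{32}\tfrac{K_1^2}{K_2^2}(D(\mu\|\pi)-t)^2,\ \tfrac{1}{32}t^2\right\}.
\end{equation*}

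Finally I would optimize over $t\in(0,D(\mu\|\pi))$. The two exponents equalize precisely when $(K_1/K_2)(D(\mu\|\pi)-t)=t$, giving $t^\star=\tfrac{K_1}{K_1+K_2}D(\mu\|\pi)$ and the claimed lower bound $\tfrac{1}{32}\bigl(\tfrac{K_1}{K_1+K_2}\bigr)^2 D^2(\mu\|\pi)$. Since this quantity is strictly positive whenever $\mu\neq\pi$ (so that $D(\mu\|\pi)>0$), exponential consistency follows at once. I expect the proof to be almost entirely clean once Theorem \ref{thm:estimator} is in hand; the only point requiring care is observing that the estimator's convergence bound applies with constants $K_1=K_2=1$ to the typical-sequence term, so that the two different rates must be balanced asymmetrically via the optimal choice of $t^\star$.
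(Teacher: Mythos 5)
Your proposal is correct and follows essentially the same route as the paper's proof: a symmetry reduction, a union bound over the $M-1$ typical sequences, two applications of Theorem \ref{thm:estimator} (with constants $K_1,K_2$ for the outlier and $K_1=K_2=1$ for a typical sequence), and a balancing of the two exponents. Your threshold $t$ is just a reparameterization of the paper's splitting constant $c$ (namely $t=(1-c)D(\mu\|\pi)$), and your optimal $t^\star=\tfrac{K_1}{K_1+K_2}D(\mu\|\pi)$ coincides with the paper's $c^\star=\tfrac{K_2}{K_1+K_2}$, yielding the same bound.
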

\begin{proof}
See Appendix \ref{app:thm3}.
\end{proof}

\section{MMD-Based Test}
\label{sec:MMD}

In this section, we introduce the MMD based test, which we previously studied in \cite{zou2014unsupervised}. We will compare $\delta_{\mathrm{KL}}$ to the MMD based test.

%as a benchmark for the KL divergence based test.
\subsection{Introduction to MMD}
In this subsection, we briefly introduce the idea of mean embedding of distributions into RKHS \cite{Srip2010} and the metric of MMD. Suppose $\cP$ is a set of probability distributions, and suppose $\cH$ is the RKHS with an associated kernel $k(\cdot,\cdot)$. We define a mapping from $\cP$ to $\cH$ such that each distribution $p\in \cP$ is mapped to an element in $\cH$ as follows
$$\mu_p(\cdot)=\mE_p [k(\cdot,x)]=\int k(\cdot,x)dp(x). $$
Here, $\mu_p(\cdot)$ is referred to as the {\em mean embedding} of the distribution $p$ into the Hilbert space $\cH$. Due to the reproducing property of $\cH$, it is clear that $\mE_p[f]=\langle \mu_p,f \rangle_{\cH}$ for all $f \in \cH$.

In order to distinguish between two distributions $p$ and $q$, Gretton \emph{et al.} \cite{Gretton2012} introduced the following quantity of maximum mean discrepancy (MMD) based on the mean embeddings $\mu_p$ and $\mu_q$ of $p$ and $q$ in RKHS:
\begin{equation}
\mmd[p,q]:=\|\mu_p-\mu_q\|_{\cH}.\nn
\end{equation}

It can be shown that
\[\mmd[p,q]=\sup_{\|f\|_{\cH} \leq 1} \mE_p[f]-\mE_q[f].\]
Due to the reproducing property of kernel, the following is true
\begin{flalign}%\label{eq:mmdpq}
\mmd^2[p,q]=\mE[k(X,X')]-2\mE[k(X,Y)]+\mE[k(Y,Y')],\nn
\end{flalign}
where $X$ and $X'$ are independent but have the same distribution $p$, and $Y$ and $Y'$ are independent but have the same distribution $q$. An unbiased estimator of $\mmd^2[p,q]$ based on $q$ and $n$ samples of $X=\{x_1,x_2,\dots,x_n\}$ generated from $p$ is given as follows,
\begin{flalign}%\label{eq:mmdu}
%& \mmd_u^2[X,Y]=\frac{1}{m_1(m_1-1)}\sum_{\substack{i=1\\j\neq i}}^{m_1,m_1} k(x_i,x_j) \\
%&+\frac{1}{m_2(m_2-1)}\sum_{\substack{i=1\\j\neq i}}^{m_2,m_2} k(y_i,y_j)-\frac{2}{m_1m_2}\sum_{\substack{i=1\\j=1}}^{m_1,m_2} k(x_i,y_j)\nn
\mmd_u^2[X,q]=\frac{1}{n(n-1)}\sum_{i=1}^{n}\sum_{j\neq i}^{n} k(x_i,x_j)+\mE[k(Y,Y')]-\frac{2}{n}\sum_{i=1}^{n}\mE [k(x_i,Y)],\nn
\end{flalign}
where $Y$ and $Y'$ are independent but have the same distribution $q$.
%We note that other estimators of the $\mmd^2[p,q]$ are also available, which can be used for our problem. In this section, we focus on the unbiased estimator given above to convey the central idea.
\subsection{Test and performance}
For each sequence $Y^{(i)}$, we compute $\mmd_u^2[Y^{(i)},\pi]$ for $1\leq i\leq M$. It is clear that if $Y^{(i)}$ is the outlier,  $\mmd_u^2[Y^{(i)},\pi]$ is a good estimator of $\mmd^2[\mu,\pi]$, which is a positive constant. On the other hand, if $Y^{(i)}$ is a typical sequence, $\mmd_u^2[Y^{(i)},\pi]$ should be a good estimator of $\mmd^2[\pi,\pi]$, which is zero. Based on the above understanding, we construct the following test:
\begin{flalign}\label{eq:test_1_withoutref}
    \delta_{\mmd}=\underset{1\leq i\leq M}{\arg\max}\mmd_u^2[Y^{(i)},\pi].
\end{flalign}

The following theorem provides a lower bound on the error exponent of $\delta_\mmd$, and further demonstrates that the test $\delta_\mmd$ is universally exponentially consistent.
\begin{theorem}\label{thm:s1withoutref}
Consider the universal outlying sequence detection problem. Suppose $\delta_\mmd$ defined in \eqref{eq:test_1_withoutref} applies a bounded kernel with $0\leq k(x,y)\leq K$ for any $(x,y)$.
Then, the error exponent is lower bounded as follows,
\begin{flalign}\label{pe1}
  \alpha(\delta_\mmd,\mu,\pi)\geq \frac{\mathrm{MMD}^4[\mu,\pi]}{9K^2}.
\end{flalign}
\end{theorem}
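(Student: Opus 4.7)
The plan is to reduce the error analysis to a concentration inequality for the statistic $\mmd_u^2[Y,\pi]$ viewed as a function of $n$ i.i.d.\ samples, then combine with a simple union bound. Without loss of generality assume sequence $i$ is the outlier. Since $\mmd_u^2$ is an unbiased estimator, $\mE[\mmd_u^2[Y^{(i)},\pi]]=\mmd^2[\mu,\pi]$ and $\mE[\mmd_u^2[Y^{(j)},\pi]]=0$ for every typical sequence $j\ne i$. Writing $d:=\mmd^2[\mu,\pi]>0$ and splitting at the midpoint $d/2$, the error event $\{\delta_{\mmd}(y^{Mn})\ne i\}$ is contained in
\begin{equation*}
\bigl\{\mmd_u^2[Y^{(i)},\pi]\le d/2\bigr\}\ \cup\ \bigcup_{j\ne i}\bigl\{\mmd_u^2[Y^{(j)},\pi]\ge d/2\bigr\},
\end{equation*}
so every contributing event is a large-deviation event of magnitude at least $d/2$ from the corresponding mean.

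Next I would establish the concentration. Treating $\mmd_u^2[Y,\pi]$ as a function $T(y_1,\dots,y_n)$ of $n$ i.i.d.\ samples, I would verify the bounded differences property: because $0\le k(\cdot,\cdot)\le K$, replacing one $y_\ell$ alters the double sum $\tfrac{1}{n(n-1)}\sum_{a\ne b}k(y_a,y_b)$ by at most $2K/n$ and alters the single sum $\tfrac{2}{n}\sum_a \mE_{Y\sim\pi}[k(y_a,Y)]$ by at most $2K/n$, while the constant term $\mE[k(Y,Y')]$ is unaffected. McDiarmid's inequality then yields a sub-Gaussian tail
\begin{equation*}
\mathbb{P}\bigl(\bigl|\mmd_u^2[Y,\pi]-\mE[\mmd_u^2[Y,\pi]]\bigr|>t\bigr)\le 2\exp\bigl(-c\,n\,t^2/K^2\bigr)
\end{equation*}
for an absolute constant $c$, and this bound holds both when $Y$ is drawn from $\mu$ (outlier case) and from $\pi$ (typical case).

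Plugging $t=d/2$ into the concentration bound makes each of the $M$ events above have probability at most $\exp(-c\,n\,d^2/(4K^2))$. A union bound over the $M$ sequences yields
\begin{equation*}
e(\delta_{\mmd},\pi,\mu)\ \le\ M\exp\bigl(-c\,n\,\mmd^4[\mu,\pi]/K^2\bigr),
\end{equation*}
so $-\tfrac{1}{n}\log e(\delta_{\mmd},\pi,\mu)$ is bounded below by a constant multiple of $\mmd^4[\mu,\pi]/K^2$ as $n\to\infty$, which gives exponential consistency. The precise numerical constant $\tfrac{1}{9}$ in the theorem statement would then follow by tracking the McDiarmid constant carefully (and potentially by a more refined choice of the split point than the symmetric value $d/2$).

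The main obstacle I anticipate is sharpening the McDiarmid constant to recover the stated $1/9$ rather than a weaker constant. This is purely an arithmetic bookkeeping issue: one has to combine the two bounded-differences contributions from the U-statistic part and from the expectation-against-$\pi$ part, then optimize the split between the outlier deviation and the typical-sequence deviation. No genuinely new idea is required beyond unbiasedness of $\mmd_u^2$, the bounded kernel assumption, and McDiarmid's inequality, together with a union bound over the $M$ sequences.
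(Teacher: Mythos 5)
Your overall strategy---unbiasedness of $\mmd_u^2$, a bounded-differences (McDiarmid) concentration bound, and a union bound over the $M$ sequences---is exactly the paper's, and it does establish universal exponential consistency. The genuine gap is in the decomposition, and it is not, as you suggest, a matter of arithmetic bookkeeping. Splitting the pairwise error event $\{\mmd_u^2[Y^{(j)},\pi]\ge \mmd_u^2[Y^{(i)},\pi]\}$ at a threshold and applying McDiarmid separately to each statistic (a function of $n$ coordinates, deviating by $t$ from its mean) provably cannot reach the stated constant: with per-coordinate bounded difference $c$, the two tails are $\exp\left(-2t_1^2/(nc^2)\right)$ and $\exp\left(-2t_2^2/(nc^2)\right)$ with $t_1+t_2=d:=\mmd^2[\mu,\pi]$, and the smaller of the two exponents is maximized precisely at the symmetric split $t_1=t_2=d/2$, giving $nd^2/(2\cdot(nc)^2)$ --- exactly half the exponent of the paper's argument. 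The paper instead applies McDiarmid once to the difference $\Delta_j=\mmd_u^2[Y^{(j)},\pi]-\mmd_u^2[Y^{(i)},\pi]$, viewed as a function of all $2n$ samples jointly: then $\sum_i c_i^2=2nc^2$ doubles, but the required deviation is the full gap $d$ (since $\mE[\Delta_j]=-\mmd^2[\mu,\pi]$ and the error event is $\{\Delta_j\ge 0\}$), and $2d^2/(2nc^2)$ beats $2(d/2)^2/(nc^2)$ by a factor of $2$. With the paper's constant $c=3K/n$ this yields $\exp\left(-n\,\mmd^4[\mu,\pi]/(9K^2)\right)$, whereas your route tops out at $\mmd^4[\mu,\pi]/(18K^2)$ no matter how you choose the split point.

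A secondary point on the constant itself: your accounting of the U-statistic term is actually the more careful one. A given sample $y_\ell$ appears in $2(n-1)$ ordered pairs of $\frac{1}{n(n-1)}\sum_{a\neq b}k(y_a,y_b)$, so that term moves by up to $2K/n$, and combined with the $2K/n$ from the cross term the range argument gives $c=4K/n$; the paper's $c=3K/n$ counts only $n-1$ of those kernel terms. With $c=4K/n$ your decomposition gives $\mmd^4/(32K^2)$ and even the paired decomposition gives $\mmd^4/(16K^2)$. So to reproduce the theorem as stated you need both the paper's paired construction (McDiarmid on $\Delta_j$ over $2n$ coordinates with deviation equal to the full $\mmd^2[\mu,\pi]$) and its per-coordinate constant $3K/n$; the structural change is the essential missing idea, and the remaining discrepancy is a constant-tracking issue in the source rather than in your argument.
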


\begin{proof}
See Appendix \ref{app:thm4}.
\end{proof}
\section{Numerical results and Discussion}
\label{sec:numerical}
In this section, we compare the performance of $\delta_{\mathrm{KL}}$ and $\delta_{\mmd}$.

We set the number of sequences $M=5$. We choose the typical distribution $\pi=\mathcal{N}(0,1) $, and choose the outlier distribution $\mu=\mathcal{N}(0,0.2),\mathcal{N}(0,1.2), \mathcal{N}(0,1.8), \mathcal{N}(0,2.0)$, respectively. In Fig.~\ref{fig:a}, Fig.~\ref{fig:b}, Fig.~\ref{fig:c} and Fig.~\ref{fig:d}, we plot the logarithm of the probability of error $\log P_e$ as a function of the sample size $n$. %The KL divergence based test is implemented with $\pi$ known.
%For a fair comparison with the KL divergence based test with known $\pi$, we implement the MMD-based test with a reference sequence with  $Mn$ samples generated from $\pi$. We compute the MMD between each sequence and the reference sequence, and pick the sequence with the largest MMD.

It can be seen that for both tests as the number of samples increases, the probability of error converges to zero as the sample size increases. Furthermore, $\log P_e$ decreases with $n$ linearly, which demonstrates the exponential consistency of both $\delta_{\mathrm{KL}}$ and $\delta_{\mmd}$.

By comparing the four figures, it can be seen that as the variance of $\mu$ deviates from the variance of $\pi$, $\delta_{\mathrm{KL}}$ outperforms $\delta_{\mmd}$. The numerical results and theoretical lower bounds on error exponents  give us some intuitions to identify regimes in which one test outperforms the other. As shown above, when the distribution $\mu$ and $\pi$ become more different from each other, $\delta_{\mathrm{KL}}$ will outperform $\delta_{\mmd}$. The reason is that for any pair of distributions, MMD is bounded between $[0, 2K]$, while the KL divergence is not bounded. As the distributions become more different from each other, the KL divergence will increase, and the KL divergence based test will have a larger error exponent than MMD based test.

%This is reasonable due to the following fact. As we have shown in Theorem \ref{thm:KL test}, the error exponent of the KL divergence based test depends on $K_1$ and $K_2$.
%As the variance of $\mu$ increases, the distribution becomes more dispersed, and hence the lower bound $K_1$ and upper bound $K_2$ of the density ratio become closer to each other. Thus, the coefficient $\frac{K_1}{K_1+K_2}$ becomes larger, and yields a better performance for the KL divergence based test.

%%
%As we show in Theorem \ref{thm:KL test}, the error exponent of the KL divergence based test depends on
%As demonstrated from the figure (a) and (b), the error probability of these two situations are similar, since the KL divergence and MMD between $\mu$ and $\pi$ in these parameter settings are close. In figure (b) and (c), the error probability of our proposed test decreases dramatically, and we notice that, as the variance of the outlier distribution $\mu$ grows, KL divergence based test start to outperform MMD based test.
%
%Intuitively, since the error exponent of KL based test not only depends on the KL divergence $D(\mu||\pi)$, but also $K_1$ and $K_2$, as we proved in theorem \ref{thm:KL test}. When the variance of $\mu$ grows, the distribution of $\mu$ becomes more spread out, the coefficient $\frac{K_1}{K_1+K_2}$ in theorem \ref{thm:KL test} becomes larger. So the error exponent of KL based test will grow faster than the exponent of MMD based test, as demonstrated in Fig. \ref{fig:b} and (c).

\begin{figure}[!htp]
\center
  \includegraphics[width=8.8cm]{}

  \caption{ Comparison of the performance between KL divergence and MMD based test with $\pi=\mathcal{N}(0,1)$ and $\mu=\mathcal{N}(0,0.2)$}\label{fig:a}
\end{figure}
 
\begin{figure}[!htp]
\center
  \includegraphics[width=8.8cm]{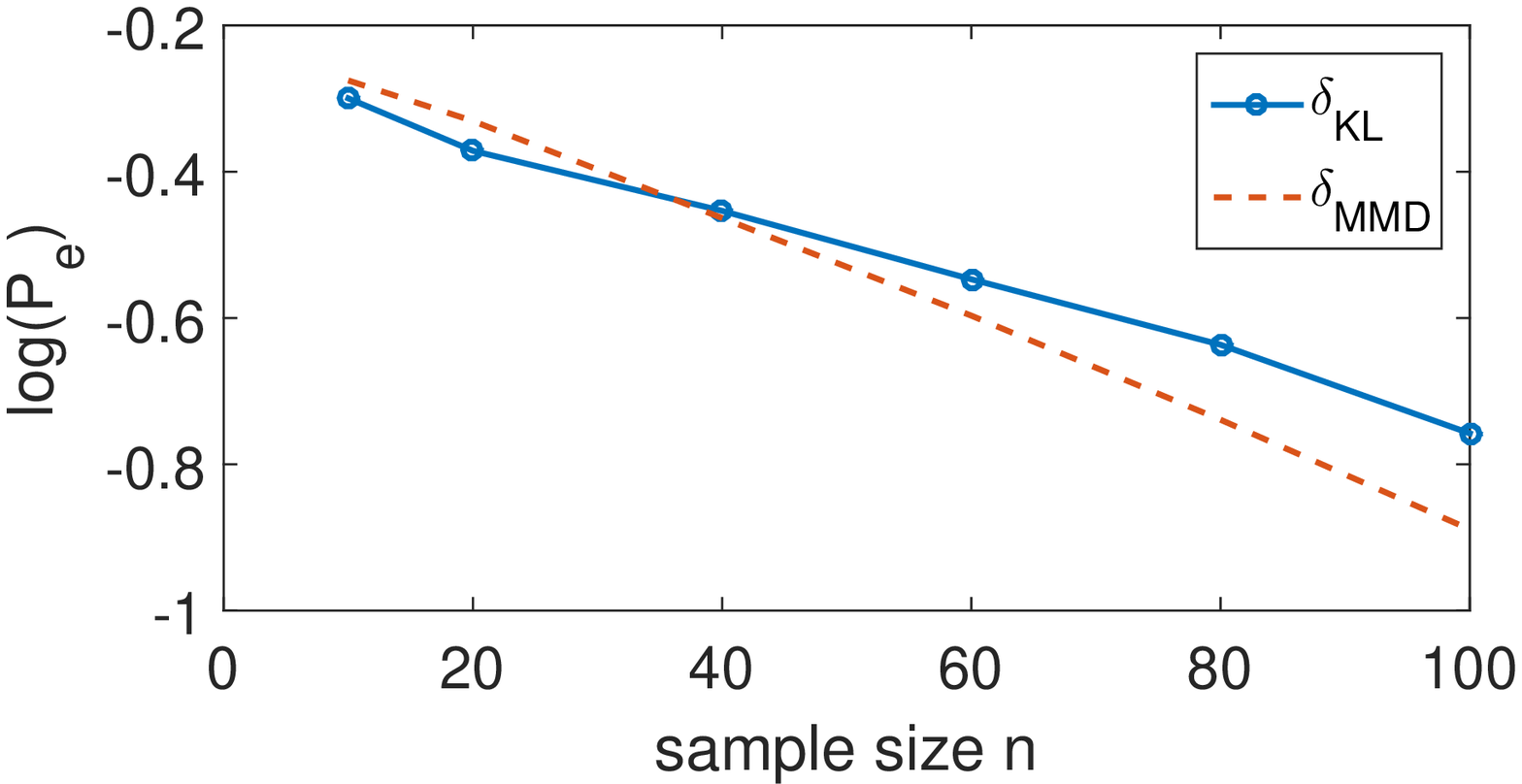}
 
  \caption{ Comparison of the performance between KL divergence and MMD based test with $\pi=\mathcal{N}(0,1)$ and $\mu=\mathcal{N}(0,1.2)$}\label{fig:b}
\end{figure}
%Comparison of the performance between KL divergence and MMD based test with
\begin{figure}[!htp]
\center
  \includegraphics[width=8.8cm]{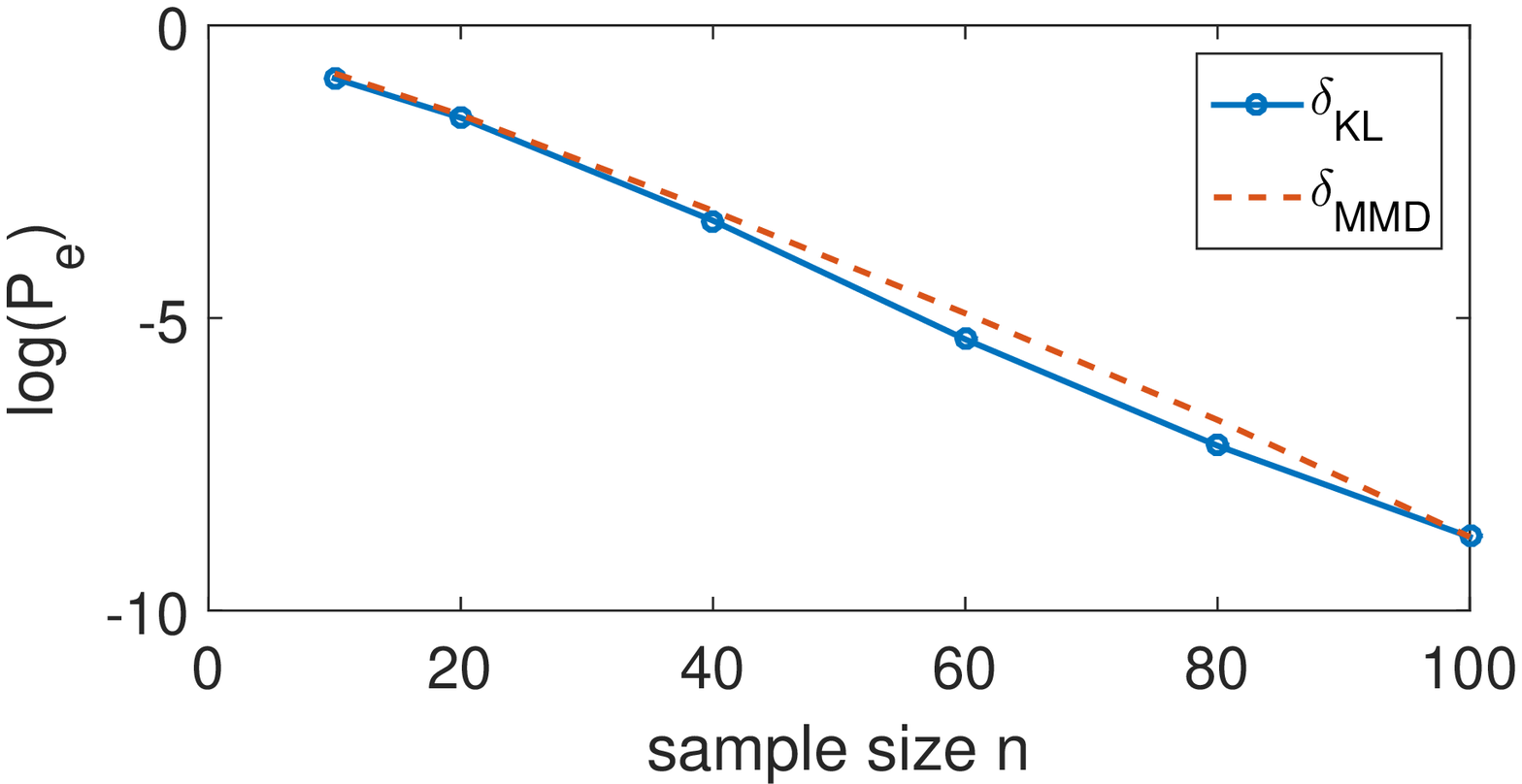}
  \vspace{-.35cm}
  \caption{ Comparison of the performance between KL divergence and MMD based test with $\pi=\mathcal{N}(0,1)$ and $\mu=\mathcal{N}(0,1.8)$}\label{fig:c}
\end{figure}
 \vspace{-.4cm}
\begin{figure}[!htp]
\center
  \includegraphics[width=8.8cm]{}
  \vspace{-.35cm}
  \caption{Comparison of the performance between KL divergence and MMD based test with $\pi=\mathcal{N}(0,1)$ and $\mu=\mathcal{N}(0,2)$}\label{fig:d}
\end{figure}

\newpage
\appendix
\noindent{\Large {\textbf{Appendix}}}
\section{Proof of Theorem 1}
\label{app:thm1}
%First, we consider the case when typical distribution $\pi$ and the outlier distribution $\mu$ are both known, and there is exactly one outlier among $M$ sequences. Conditioned on the hypothesis that the $i$-th sequence is the outlier, the joint distribution of all the observations is
%\begin{equation*}
%  p_i(y^{Mn}) =\prod_{k=1}^n \left\{\mu(y_k^{(i)}) \prod_{j\ne i}\pi(y_k^{(j)}) \right\}.
%\end{equation*}
%Let $$L_j = \frac{1}{n}\sum_{k=1}^n\log\frac{\mu(y_k^{(j)})}{\pi(y_k^{(j)})}. $$
%
%When $P$ and $Q$ are both known, we can simply use the M-ary hypothesis test, which is a likelihood test:
%\begin{equation*}
%  \delta(y^{Mn}) = \argmax_{1\le i \le M} \log \frac{p_i(y^{Mn})}{\prod_{k=1}^n  \prod_{j=1}^M \pi(y_k^{(j)}) }=\argmax_{1\le j \le M} \left\{ \frac{1}{n}\sum_{k=1}^n\log\frac{\mu(y_k^{(j)})}{\pi(y_k^{(j)})}\right\}= \argmax_{1\le j \le M} L_j.
%\end{equation*}
Recall the maximum likelihood test is defined as
\begin{equation*}
  \delta_{\mathrm{ML}}(y^{Mn}) = \argmax_{1\le i \le M} \log \frac{p_i(y^{Mn})}{  \prod_{j=1}^M \prod_{k=1}^n\pi(y_k^{(j)}) }=\argmax_{1\le i \le M} \left\{ \frac{1}{n}\sum_{k=1}^n\log\frac{\mu(y_k^{(i)})}{\pi(y_k^{(i)})}\right\}= \argmax_{1\le i \le M} L_i.
\end{equation*}

Now we will characterize the exponent for the maximum likelihood test. By the symmetry of the problem, it is clear that $\mathbb{P}_i\{\delta \ne i \}$ is the same for every $i = 1,\dots,M$, hence
\begin{equation*}
  \max_{i=1,\dots,M} \mathbb{P}_i\{\delta_{\mathrm{ML}} \ne i \} = \mathbb{P}_1\{\delta_{\mathrm{ML}} \ne 1 \}.
\end{equation*}
It now follows
\begin{equation*}
\mathbb{P}_1\left \{L_1 \le  L_2\right \} \le \mathbb{P}_1\{\delta \ne 1 \} = \mathbb{P}_1\left \{ L_1 \le \max _{2\le j\le M} L_j \right \} \le (M-1) \mathbb{P}_1\left \{L_1 \le  L_2\right \}.
\end{equation*}
Since $\log(M)/n \to 0$, the left hand side and right hand side will share a same error probability exponent, so we just need to compute the exponent for $\mathbb{P}_1\left \{L_1 \le  L_2\right \} $.

Let us use the notation,
\begin{equation*}
  Z_k = \log\left( \frac{\mu(y_k^{(1)})}{\pi(y_k^{(1)})} \frac{\pi(y_k^{(2)})}{\mu(y_k^{(2)})} \right).
\end{equation*}
Then, we can rewrite the probability,
\begin{align}
\mathbb{P}_1\left \{L_1 \le  L_2\right \} &= \mathbb{P}_1\left \{ \sum_{k=1}^n\log\frac{\mu(y_k^{(1)})}{\pi(y_k^{(1)})}-\sum_{k=1}^n\log\frac{\mu(y_k^{(2)})}{\pi(y_k^{(2)})} \le 0  \right \}\nn\\
&= \mathbb{P}_1\left \{ \sum_{k=1}^n Z_k \le 0  \right \}. \nn
\end{align}

Thus we can apply the Cramer's theorem directly.
\begin{align*}
 \lim_{n\to \infty} -\frac{1}{n} \mathbb{P}_1\left \{ \sum_{k=1}^n Z_k \le na \right \} =\Lambda_Z(a),
\end{align*}
for $a<\mathbb{E}(Z)=D(\pi||\mu)+D(\mu||\pi)$, and $\Lambda_Z(a)$ is the large-deviation rate function.

In our case, $0< \mathbb{E}(Z)$ for $\mu \ne \pi$. So
\begin{align*}
 \lim_{n\to \infty} -\frac{1}{n} \mathbb{P}_1\left \{ \sum_{k=1}^n Z_k \le 0 \right \} =\Lambda_Z(0)= \sup_\lambda \big[-\kappa_Z(\lambda) \big].
\end{align*}
We just need to compute the log-MGF of random variable $Z$,
\begin{equation*}
  \kappa_Z(\lambda) = \log\mathbb{E}(e^{\lambda Z} ) =\log \mathbb{E} \left[ \frac{\mu^\lambda(Y^{(1)})}{\pi^\lambda(Y^{(1)})} \frac{\pi^\lambda(Y^{(2)})}{\mu^\lambda(Y^{(2)})} \right].
\end{equation*}
Given that $Y^{(1)}$ is generated from $\mu$, $Y^{(2)}$ is generated from $\pi$, we have
\begin{align}
  \kappa_Z(\lambda) &= \log \left(\int \frac{\mu^{\lambda+1}(y^{(1)})}{\pi^\lambda(y^{(1)})} \frac{\pi^{\lambda+1}(y^{(2)})}{\mu^\lambda(y^{(2)})} dy^{(1)} dy^{(2)} \right)\nn\\
  &=\log \left(\int \frac{\mu^{\lambda+1}(y^{(1)})}{\pi^\lambda(y^{(1)})}   dy^{(1)} \right) + \log \left(\int \frac{\pi^{\lambda+1}(y^{(2)})}{\mu^\lambda(y^{(2)})}  dy^{(2)} \right)\nn\\
  &= -C_\lambda(\pi,\mu)-C_\lambda(\mu,\pi), \nn
\end{align}
where $$C_\lambda(p,q) \triangleq - \log \left( \int p^{\lambda}(y) q^{1-\lambda}(y) dy \right).$$
In this case, it is easy to show that the error exponent
\begin{equation}\label{ratefunction}
 \sup_\lambda \big[-\kappa_Z(\lambda) \big] = \max_\lambda \big[ C_\lambda(\pi,\mu)+C_\lambda(\mu,\pi)\big].
\end{equation}
Since $C_\lambda(p,q)$ is concave with $\lambda$, and $C_\lambda(\pi,\mu) =C_{1-\lambda}(\mu,\pi)$, \eqref{ratefunction} is maximized when $\lambda^* =\frac{1}{2}$, so
\begin{equation*}
 \lim_{n\to \infty} -\frac{1}{n} \max_{i=1,\dots,M} \mathbb{P}_i\{\delta_{\mathrm{ML}} \ne i \}=\max_\lambda \big[ C_\lambda(\pi,\mu)+C_\lambda(\mu,\pi)\big] =2B(\pi,\mu),
\end{equation*}
where $B(\pi,\mu)$ is the Bhattacharyya distance between $\mu$ and $\pi$ which is defined as
\begin{equation*}
  B(\pi,\mu) \triangleq -\log \left( \int \mu(y)^{\frac{1}{2}}\pi(y)^{\frac{1}{2}} dy  \right).
\end{equation*}

\section{Proof of Theorem 2}
\label{app:thm2}
To show the exponential consistency of our estimator, we invoke a result by Lugosi and Nobel \cite{lugosi1996consistency}, that specifies sufficient conditions on the partition of the space under which the empirical measure converges to the true measure.

Let $\mathcal{A}$ be a family of partitions of $\mathbb{R}$. The maximal cell count of $\mathcal{A}$ is given by
\begin{equation*}
  c(\mathcal{A}) \triangleq \sup_{\pi \in \mathcal{A}} |\pi|,
\end{equation*}
where $|\pi|$ denotes the number of cells in partition $\pi$.

The complexity of $\mathcal{A}$ is measured by the growth function as described below. Fix $n$ points in $\mathbb{R}$,
\begin{equation*}
  x_1^n =\{x_1,\dots,x_n  \}.
\end{equation*}
Let $\Delta(\mathcal{A},x_1^n)$ be the number of distinct partitions
\begin{equation*}
  \{I_1\cap x_1^n,\dots, I_r\cap x_1^n  \}
\end{equation*}
of the finite set $x_1^n$ that can be induced by partitions $\pi = \{I_1,\dots,I_r \} \in \mathcal{A}$. Define the growth function of $\mathcal{A}$ as
\begin{equation*}
  \Delta_n^*(\mathcal{A})\triangleq \max_{x_1^n \in \mathbb{R}^n} \Delta(\mathcal{A},x_1^n),
\end{equation*}
which is the largest number of distinct partitions of any $n$-point
subset of $\mathbb{R}$ that can be induced by the partitions in $\mathcal{A}$.

\begin{lemma}\label{Lugosilemma}(Lugosi and Nobel )
Let $Y_1,Y_2,\dots$ be i.i.d. random variables in $\mathbb{R}$ with $Y_i \sim \mu$ and let $\mu_n$ denote the empirical probability measure based on $n$ samples. $\mathcal{A}$ be any collection of partitions of $\mathbb{R}$. For each $n\ge1$ and every $\epsilon >0$, then
\begin{equation}\label{Lugosilemmaeq}
  \mathbb{P} \left\{ \sup_{\pi \in \mathcal{A}} \sum_{I \in \pi} |\mu_n(I)-\mu(I)|>\epsilon \right\}\le 4 \Delta_{2n}^*(\mathcal{A})2^{c(\mathcal{A})}\exp(-n \epsilon^2/32).
\end{equation}
\end{lemma}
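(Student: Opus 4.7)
The plan is to prove the Lugosi--Nobel uniform deviation inequality by the classical three-step recipe of symmetrization, reduction to signed-mass statistics over unions of cells, and a conditioning-plus-counting argument that exploits the growth function.

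First I would introduce an independent ghost sample $Y_1', \dots, Y_n' \sim \mu$ with empirical measure $\mu_n'$ and carry out the standard symmetrization: show that
\[
\mathbb{P}\Big\{ \sup_{\pi \in \mathcal{A}} \sum_{I \in \pi} |\mu_n(I)-\mu(I)| > \epsilon \Big\} \le 2\, \mathbb{P}\Big\{ \sup_{\pi \in \mathcal{A}} \sum_{I \in \pi} |\mu_n(I)-\mu_n'(I)| > \epsilon/2 \Big\}.
\]
The justification is Chebyshev applied to the ghost fluctuation $|\mu_n'(I) - \mu(I)|$ after picking, for each realization of $Y_1^n$, a witnessing partition; the inequality trivially holds when the target RHS exceeds one, which handles the small-$n\epsilon^2$ regime, so only the nontrivial range of parameters needs care here.

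Next I would use the total-variation identity that, for any signed measure of total mass zero on a finite partition $\pi$,
\[
\sum_{I \in \pi} |\mu_n(I)-\mu_n'(I)| = 2 \sup_{B \in \sigma(\pi)} |\mu_n(B) - \mu_n'(B)|,
\]
where $\sigma(\pi)$ denotes the collection of unions of cells of $\pi$, of cardinality at most $2^{|\pi|}\le 2^{c(\mathcal{A})}$. This reduces the problem to controlling $|\mu_n(B)-\mu_n'(B)|$ above the threshold $\epsilon/4$. Then I would invoke the Rademacher-style randomization: conditional on the pooled sample $\{Y_i, Y_i'\}_{i=1}^n$, the joint law is invariant under independent swaps $Y_i \leftrightarrow Y_i'$, so with iid signs $\eta_i \in \{-1,+1\}$,
\[
\mu_n(B) - \mu_n'(B) \stackrel{d}{=} \frac{1}{n}\sum_{i=1}^n \eta_i\bigl( \mathbf{1}_{Y_i\in B} - \mathbf{1}_{Y_i'\in B}\bigr).
\]
The summands are in $[-1,1]$, so a Hoeffding bound applied conditionally to the Rademacher sum yields
\[
\mathbb{P}\bigl\{ |\mu_n(B) - \mu_n'(B)| > \epsilon/4 \,\big|\, \text{pool}\bigr\} \le 2\exp(-n\epsilon^2/32).
\]

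The final step is the counting argument that gives the growth-function factor. Conditioning on the pool $\{Y_i, Y_i'\}_{i=1}^n$ freezes the traces $I \cap \text{pool}$ for all $I$ and all $\pi \in \mathcal{A}$; by definition of $\Delta_{2n}^*(\mathcal{A})$, these traces realize at most $\Delta_{2n}^*(\mathcal{A})$ distinct partitions of the $2n$-point pool, and all relevant statistics $\mu_n(B)-\mu_n'(B)$ depend on $\pi$ only through its trace. Hence a conditional union bound over at most $\Delta_{2n}^*(\mathcal{A}) \cdot 2^{c(\mathcal{A})}$ (effective partition, cell-union) pairs, followed by taking expectations over the pool and multiplying by the symmetrization factor of two, yields the asserted constant $4$. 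The main obstacle is precisely this counting step: one must articulate carefully that the uncountable supremum over $\mathcal{A}$ collapses, after conditioning, to a finite supremum whose cardinality is bounded by $\Delta_{2n}^*(\mathcal{A})\cdot 2^{c(\mathcal{A})}$, and that the subsequent Hoeffding bound applies uniformly because the bound depends on $B$ only through the number of sample points it contains.
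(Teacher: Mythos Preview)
The paper does not prove this lemma: it is stated as a result of Lugosi and Nobel \cite{lugosi1996consistency} and invoked as a black box in the proof of Theorem~\ref{thm:estimator}. So there is no ``paper's own proof'' to compare against here.

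That said, your sketch is the standard (and correct) argument, essentially the one Lugosi and Nobel give. The symmetrization factor of $2$ with threshold $\epsilon/2$, the total-variation identity $\sum_{I\in\pi}|\mu_n(I)-\mu_n'(I)|=2\sup_{B\in\sigma(\pi)}|\mu_n(B)-\mu_n'(B)|$ reducing the threshold to $\epsilon/4$, the conditional Hoeffding bound $2\exp(-n\epsilon^2/32)$ on each Rademacher sum, and the union bound over at most $\Delta_{2n}^*(\mathcal A)\cdot 2^{c(\mathcal A)}$ effective sets all combine to give exactly the stated constant $4$ and exponent $n\epsilon^2/32$. Your handling of the small-$n\epsilon^2$ regime (where the right-hand side exceeds $1$ and the bound is vacuous) is also the right way to dispense with the range of parameters for which the Chebyshev step in symmetrization would otherwise need attention. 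The only cosmetic point: the conditional Hoeffding bound is uniform in $B$ simply because the coefficients $\mathbf 1_{Y_i\in B}-\mathbf 1_{Y_i'\in B}$ lie in $[-1,1]$ for every $B$, not merely because the bound depends on $B$ through a count; but this does not affect the argument.
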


To prove theorem 2, we consider the case when typical distribution $q$ is known, and a given sequence $Y \in \mathbb{R}^n$ is independently generated from an unknown distribution $p$. We further assume that $p$ and $q$ are both absolutely continuous probability measures defined on $(\mathbb{R},\mathcal{B}_{\mathbb{R}})$, and satisfy
$$ 0<K_1 \le  \frac{dp}{dq} \le K_2.$$

Denote the empirical probability measure based on the sequence $Y$ by $p_n$ (Since $Y$ is generated from $p$) and defined the empirical equiprobable partitions as follow.
If the order statistics of $Y$ can be expressed as $\{Y_{(1)}, Y_{(2)}, \dots, Y_{(n)} \}$ where $Y_{(1)}\le Y_{(2)}\le \dots\le Y_{(n)}$. The real line is partitioned into empirically equivalent segments according to
\begin{equation*}
\begin{split}
    \{I^n_t \}_{t=1,\dots,T_n}=\{(-\infty,Y_{(\ell_n)}],&\ (Y_{(\ell_n)},Y_{(2\ell_n)}],  \dots,(Y_{(\ell_n(T_n-1))},\infty)\},
\end{split}
\end{equation*}
where $\ell_n \in \mathbb{N}\le n$ is the number of points in each interval except possibly the last one, and $T_n=\lfloor n/\ell_n \rfloor$ is the number of intervals. Assume that as $n \to \infty$, both $T_n, \ell_n \to \infty$. So our estimator can be written as
\begin{equation*}
  \hat{D}_n(Y||q)=\sum_{t=1}^{T_n} p_n(I_t^n) \log\frac{p_n(I_t^n)}{q(I_t^n)}.
\end{equation*}

If we denote the true equiprobable partitions based on true distribution $p$ by $I_t$, then
\begin{equation*}
  p(I_t)=\frac{1}{T_n}= p_n(I_t^n).
\end{equation*}

The estimation error can be decomposed as
\begin{equation*}
\begin{split}
    |\hat{D}_{n}(Y||q)-D(p||q)| \le & \bigg|  \sum_{t=1}^{T_n} p_n(I_t^n) \log \frac{p_n(I_t^n)}{q(I_t^n)}-\sum_{t=1}^{T_n} p(I_t) \log \frac{p(I_t)}{q(I_t)} \bigg|  \\
    & + \bigg|  \sum_{t=1}^{T_n} p(I_t) \log \frac{p(I_t)}{q(I_t)}-\int_{\mathbb{R}} dp \log \frac{dp}{dq}\bigg|
    \triangleq e_1+e_2.
\end{split}
\end{equation*}
Intuitively, $e_2$ is the approximation error caused by numerical integration, which diminishes as $T_n$ increases; $e_1$ is the estimation
error caused by the difference of the empirical equivalent partitions from the true equiprobable partitions and the difference
of the empirical probability measure on an interval from its true probability measure.

In addition, $e_2$ is only depends on $T_n$ and distribution $p$ and $q$, namely, $e_2$ is a deterministic term, while $e_1$ also depends on data $Y$, which is random. Next, we will focus on bounding the $e_1$ term.

Since $ p(I_t)=\frac{1}{T_n}= p_n(I_t^n)$, the approximation error $e_1$ can be written as
\begin{equation*}
\begin{split}
    e_1&=\bigg|  \sum_{t=1}^{T_n} p_n(I_t^n) \log \frac{p_n(I_t^n)}{q(I_t^n)}-\sum_{t=1}^{T_n} p(I_t) \log \frac{p(I_t)}{q(I_t)} \bigg|\\
    &=\bigg|  \sum_{t=1}^{T_n} \frac{1}{T_n} \big( \log q(I_t) -  \log q(I_t^n)\big) \bigg|\\
    & \le \sum_{t=1}^{T_n} \frac{1}{T_n} \bigg|    \big( \log q(I_t) -  \log q(I_t^n)\big) \bigg|\\
    &\le \sum_{t=1}^{T_n} \frac{1}{T_n} f'(\xi_i)\big|     q(I_t) -  q(I_t^n)\big|,
\end{split}
\end{equation*}
where $f(x)=\log x$, and $f'(x)=1/x$, $\xi$ is a real number between $q(I_t)$ and $q(I_t^n)$. We utilize the mean value theorem to get the last inequality.

Since $\xi \ge \min\{q(I_t),q(I^n_t) \}$, we get
\begin{align}\label{controle1}
    e_1  & \le \frac{1}{T_n}  \sum_{t=1}^{T_n}  \max\{\frac{1}{q(I_t)},\frac{1}{q(I_t^n)} \} \big|q(I_t) -  q(I_t^n)\big|\nn \\
        & \le  \frac{\max_{1\le t \le T_n}\{\frac{1}{q(I_t)},\frac{1}{q(I_t^n)} \}}{T_n} \sum_{t=1}^{T_n} \big|q(I_t) -  q(I_t^n)\big|\nn \\
        & =\frac{1}{\alpha}  \sum_{t=1}^{T_n} \big|q(I_t) - q(I_t^n)\big|,
\end{align}
where $$\alpha = \frac{T_n}{\max_{1\le t\le T_n} \{\frac{1}{q(I_t)},\frac{1}{q(I_t^n)} \} }.$$

To get an exponential bound for $e_1$, we will apply lemma \ref{Lugosilemma} to our problem. For our case, $I_t^n$ are the equivalent segments based on the empirical measure $p_n$. Suppose $\mathcal{A}_n$ is the collection of all the partitions of $\mathbb{R}$ into empirically equiprobable intervals based on $n$ sample points. Then, from \eqref{Lugosilemmaeq}
\begin{align}\label{expbound}
\mathbb{P} \left\{  \sum_{t=1}^{T_n} |p_n(I_t^n)-p(I_t^n)|>\epsilon \right\} & \le
\mathbb{P} \left\{ \sup_{\pi \in \mathcal{A}_n} \sum_{I \in \pi} |p_n(I)-p(I)|>\epsilon \right\}\nn \\
&\le 4 \Delta_{2n}^*(\mathcal{A}_n)2^{c(\mathcal{A}_n)}\exp(-n \epsilon^2/32).
\end{align}

If we want to get a meaningful exponential bound, we still need to verify 2 conditions in our case: as $n \to \infty$,
\begin{equation*}
  \mbox{a)  } n^{-1}c(\mathcal{A}_n) \to 0  \mbox{,   \quad b)  } n^{-1}\log \Delta_{2n}^*(\mathcal{A}_n) \to 0.
\end{equation*}

Here,
\begin{equation*}
  c(\mathcal{A}_n) =\sup_{\pi \in \mathcal{A}_n} |\pi|=T_n.
\end{equation*}
Since $\ell_n=n/T_n \to \infty$ as $n \to \infty$, we have that
\begin{equation*}
  \frac{c(\mathcal{A}_n)}{n}=\frac{1}{\ell_n} \to 0.
\end{equation*}

Next consider the growth function $\Delta_{2n}^*(\mathcal{A}_n)$ which is defined
as the largest number of distinct partitions of any $2n$-point subset
of $\mathbb{R}$ that can be induced by the partitions in $\mathcal{A}_n$. Namely
\begin{equation*}
  \Delta_{2n}^*(\mathcal{A}_n)=\max_{x_1^{2n} \in \mathbb{R}^{2n}} \Delta(\mathcal{A}_n,x_1^{2n}).
\end{equation*}
In our algorithm, the partitioning number $\Delta_{2n}^*(\mathcal{A}_n)$ is the
number of ways that $2n$ fixed points can be partitioned by $T_n$ intervals. Then
\begin{equation*}
  \Delta_{2n}^*(\mathcal{A}_n)= {{2n+T_n}\choose {T_n}}.
\end{equation*}
Let $h$ be the binary entropy function, defined as
\begin{equation*}
  h(x)=-x \log(x)-(1-x)\log(1-x), \mbox{for }\ x \in (0,1).
\end{equation*}
By the inequality $\log {s\choose t} \le s h(t/s)$ , we obtain
\begin{equation*}
  \log \Delta_{2n}^*(\mathcal{A}_n) \le (2n+T_n) h\big(\frac{T_n}{2n+T_n}\big)
  \le 3n h\big(\frac{1}{2 \ell_n}\big).
\end{equation*}
As $\ell_n \to \infty$, the last inequality implies that
\begin{equation*}
  \frac{1}{n}\log \Delta_{2n}^*(\mathcal{A}_n) \to 0.
\end{equation*}

Now, we can conclude that the inequality \eqref{expbound} is actually an exponential bound, the coefficients $\Delta_{2n}^*(\mathcal{A}_n)$ and $2^{c(\mathcal{A}_n)}$ will not influence the exponent.

Since $|p_n(I_t^n)-p(I_t^n)|=|\frac{1}{T_n}-p(I_t^n)|=|p(I_t)-p(I_t^n)|$
and $K_1 \le \frac{dp}{dq} \le K_2$, the following holds
\begin{align}\label{convergeQ}
\mathbb{P} \left\{  \sum_{t=1}^{T_n} |q(I_t^n)-q(I_t)|>\epsilon \right\} & \le
\mathbb{P} \left\{ \sum_{t=1}^{T_n} |p(I_t^n)-p(I_t)|> K_1 \epsilon \right\}\nn\\
&=\mathbb{P} \left\{  \sum_{t=1}^{T_n} |p_n(I_t^n)-p(I_t^n)|>K_1 \epsilon \right\}\nn\\
&\le 4 \Delta^*_{2n}(\mathcal{A}_n)2^{c(\mathcal{A}_n)}\exp(-n K_1^2 \epsilon^2/32).
\end{align}

Combine with \eqref{controle1}, we can control the estimation error $e_1+e_2$ with the following bound
\begin{equation*}
\begin{split}
\mathbb{P} \left\{  e_1+e_2 >\epsilon \right\} & \le
\mathbb{P} \left\{ \frac{1}{\alpha}  \sum_{t=1}^{T_n} \big|q(I_t) - q(I_t^n)\big|>\epsilon-e_2 \right\}\\
&\le 4 \Delta^*_{2n}(\mathcal{A}_n)2^{c(\mathcal{A}_n)}\exp(-n \alpha^2 K_1^2 (\epsilon-e_2)^2/32).\\
\end{split}
\end{equation*}
Recall that $$\alpha = \frac{T_n}{\max_{1\le t\le T_n} \{\frac{1}{q(I_t)},\frac{1}{q(I_t^n)} \} }.$$
Since we show that $q(I_t^n)$ converges to $q(I_t)$ exponentially fast in \eqref{convergeQ}, we have
\begin{equation*}
\begin{split}
\lim_{n\to \infty} \alpha & =\lim_{n\to \infty}  \frac{T_n}{\max_{1\le t\le T_n} \{\frac{1}{q(I_t)},\frac{1}{q(I_t^n)} \} }\\
&=\lim_{n\to \infty}  \frac{1}{p(I_t)\max_{1\le t\le T_n} \{\frac{1}{q(I_t)} \} }\\
&=\lim_{n\to \infty}  \frac{\min_{1\le t\le T_n} \{{q(I_t)} \}}{p(I_t) }\ge \frac{1}{K_2}.\\
\end{split}
\end{equation*}
Finally, we can compute the error exponent,
\begin{equation*}
\begin{split}
\lim_{n\to \infty} -\frac{1}{n} \log\left(\mathbb{P} \left\{ |\hat{D}_{n}(Y||q)-D(p||q)| >\epsilon \right\}\right) & \ge
\lim_{n\to \infty} -\frac{1}{n} \log(\mathbb{P} \left\{ e_1+e_2 >\epsilon \right\})\\
&\ge\lim_{n\to \infty} -\frac{1}{n} \log\left\{ 4 \Delta^*_{2n}(\mathcal{A})2^{c(\mathcal{A})}\exp(-n \alpha^2 K_1^2 (\epsilon-e_2)^2/32)\right\})\\
&=  \lim_{n\to \infty} \left(\alpha^2 K_1^2 (\epsilon-e_2)^2/32-\frac{1}{n}\log \Delta_{2n}^*(\mathcal{A}_n) -\frac{c(\mathcal{A}_n)}{n}\right) \\
&=  \lim_{n\to \infty} \frac{\alpha^2 K_1^2 (\epsilon-e_2)^2}{32} \\
&\ge \lim_{n\to \infty} \frac{1}{32}\left(\frac{K_1}{K_2}\right)^2 (\epsilon-e_2)^2. \\
\end{split}
\end{equation*}

Since $e_2$ is the approximation error caused by numerical integration, $\lim_{n \to \infty}e_2= 0$. We prove that
\begin{equation*}
\begin{split}
\lim_{n\to \infty} -\frac{1}{n} \log\left(\mathbb{P} \left\{ |\hat{D}_{n}(Y||q)-D(p||q)| >\epsilon \right\}\right) & \ge
 \frac{1}{32}\left(\frac{K_1}{K_2}\right)^2 \epsilon^2.
\end{split}
\end{equation*}

\section{Proof of Theorem 3}
\label{app:thm3}
Recall our test is defined as
\begin{equation*}
  \delta_{\mathrm{KL}}(y^{Mn}) = \argmax_{1\le j \le M} \hat{D}_n(Y^{(j)}||\pi).
\end{equation*}

Now we will show the test we proposed is exponentially consistent. By the symmetry of the problem, it is clear that $\mathbb{P}_i\{\delta_{\mathrm{KL}} \ne i \}$ is the same for every $i = 1,\dots,M$, hence
\begin{equation*}
  \max_{i=1,\dots,M} \mathbb{P}_i\{\delta_{\mathrm{KL}} \ne i \} = \mathbb{P}_1\{\delta_{\mathrm{KL}} \ne 1 \}.
\end{equation*}
It now follows
\begin{align*}
    \mathbb{P}_1\{\delta_{\mathrm{KL}} \ne 1 \}& = \mathbb{P}_1\left \{\hat{D}_n(Y^{(1)}||\pi) \le \max _{2\le j\le M}\hat{D}_n(Y^{(j)}||\pi)\right \} \\
      & \le (M-1) \mathbb{P}_1\left \{\hat{D}_n(Y^{(1)}||\pi) \le  \hat{D}_n(Y^{(2)}||\pi)\right \} \\
      & = (M-1) \mathbb{P}_1\left \{\hat{D}_n(Y^{(1)}||\pi) -D(\mu||\pi)+  \hat{D}_n(Y^{(2)}||\pi) \le -D(\mu||\pi)  \right \} \\
      & \le (M-1) \mathbb{P}_1\left \{\Big|\hat{D}_n(Y^{(1)}||\pi) -D(\mu||\pi)\Big|+  \Big|\hat{D}_n(Y^{(2)}||\pi)\Big| \ge D(\mu||\pi)  \right \}\\
      & \le (M-1)\left(\mathbb{P}_1\left \{\Big|\hat{D}_n(Y^{(1)}||\pi) -D(\mu||\pi)\Big| \ge c D(\mu||\pi) \right \} + \mathbb{P}_1\left \{\Big|\hat{D}_n(Y^{(2)}||\pi)\Big|   \ge (1-c)D(\mu||\pi) \right \}\right)
\end{align*}
where $c \in (0,1)$, so that we can optimize over $c$ to get a tighter bound on error exponent.

Now apply the result we proved in Theorem 2. We get
\begin{align}
  \lim_{n\to \infty} -\frac{1}{n} \log\mathbb{P}_1\left \{\Big|\hat{D}_n(Y^{(1)}||\pi) -D(\mu||\pi)\Big| \ge c D(\mu||\pi) \right \} & \le \frac{c^2}{32}\left(\frac{K_1}{K_2}\right)^2 D^2(\mu||\pi)\nn \\
  \lim_{n\to \infty} -\frac{1}{n} \log\mathbb{P}_1\left \{\Big|\hat{D}_n(Y^{(2)}||\pi)\Big|   \ge (1-c)D(\mu||\pi) \right \}& \le \frac{(1-c)^2}{32} D^2(\mu||\pi). \nn
\end{align}
The optimal result is achieved when the two exponents are equal, we get:
\begin{equation*}
  c^*=\frac{K_2}{K_1+K_2},
\end{equation*}
and the error exponent we get is $\frac{1}{32}\left(\frac{K_1}{K_1+K_2}\right)^2 D^2(\mu||\pi) $.
\begin{equation*}
\begin{split}
\alpha(\delta_{\mathrm{KL}},\pi,\mu)\ge \frac{1}{32}\left(\frac{K_1}{K_1+K_2}\right)^2 D^2(\mu||\pi).
\end{split}
\end{equation*}

\section{Proof of Theorem 4}
\label{app:thm4}
We first introduce the McDiarmid's inequality which is useful in bounding the probability of error in our proof.
\begin{lemma}[McDiarmid's Inequality]\label{mcdiarmid}
Let $f:\mathcal{X}^m\rightarrow \mathbb{R}$ be a function such that for all $i\in\{1,\ldots,m\}$, there exist $c_i<\infty$ for which
\begin{equation}\label{bdd}
\underset{X\in\mathcal{X}^m, \tilde{x}\in \mathcal X}{\sup}|f(x_1,\ldots,x_m)-f(x_1,\ldots x_{i-1},\tilde x,x_{i+1},\ldots,x_m)|\leq c_i.
\end{equation}
Then for all probability measure $p$ and every $\epsilon>0$,
\begin{equation*}
\mathbb{P}_X\bigg(f(X)-\mathbb{E}_X(f(X))>\epsilon\bigg)<\exp\left(-\frac{2\epsilon^2}{\sum_{i=1}^mc_i^2}\right),
\end{equation*}
where $X$ denotes $(x_1,\ldots,x_m)$, $\mathbb{E}_X$ denotes the expectation over the $m$ random variables $x_i\thicksim p$, and $\mathbb{P}_X$ denotes the probability over these $m$ variables.
\end{lemma}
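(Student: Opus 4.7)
The plan is to prove exponential consistency of $\delta_\mmd$ by combining the symmetry of the problem, a union bound over pairwise comparisons, and McDiarmid's inequality applied to the unbiased MMD statistic $\mmd_u^2[\,\cdot\,,\pi]$. The two facts driving the argument are that $\mmd_u^2[Y,\pi]$ is an unbiased estimator of $\mmd^2[p,\pi]$ when $Y$ is i.i.d.\ from $p$, so $\mE[\mmd_u^2[Y^{(1)},\pi]]=\mmd^2[\mu,\pi]$ while $\mE[\mmd_u^2[Y^{(j)},\pi]]=\mmd^2[\pi,\pi]=0$ for a typical sequence; and that the bound $0\le k(x,y)\le K$ translates directly into small bounded-differences constants for $\mmd_u^2[\,\cdot\,,\pi]$ viewed as a function of its $n$ input samples.

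First I would exploit the symmetry: since the labelling of the outlier index is arbitrary, $\max_{1\le i\le M}\mathbb{P}_i\{\delta_\mmd\neq i\}=\mathbb{P}_1\{\delta_\mmd\neq 1\}$, and a union bound gives $\mathbb{P}_1\{\delta_\mmd\neq 1\}\le (M-1)\,\mathbb{P}_1\{\mmd_u^2[Y^{(1)},\pi]\le \mmd_u^2[Y^{(2)},\pi]\}$, with the prefactor inconsequential in the exponent since $\log(M-1)/n\to 0$. For any split $c\in(0,1)$, since the expected gap between the two statistics is $\mmd^2[\mu,\pi]$, an error forces at least one of the deviations
\begin{equation*}
\{\mmd_u^2[Y^{(1)},\pi]-\mmd^2[\mu,\pi]\le -c\,\mmd^2[\mu,\pi]\}\cup\{\mmd_u^2[Y^{(2)},\pi]\ge (1-c)\,\mmd^2[\mu,\pi]\},
\end{equation*}
so the problem reduces to two one-sided concentration statements for $\mmd_u^2[\,\cdot\,,\pi]$.

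I would then verify the bounded-differences property needed by McDiarmid's inequality. Using the representation $\mmd_u^2[Y,\pi]=\frac{1}{n(n-1)}\sum_{i\neq j}k(y_i,y_j)+\mE[k(Y,Y')]-\frac{2}{n}\sum_i \mE[k(y_i,Y)]$, replacing a single coordinate $y_i$ by $\tilde y_i$ perturbs the double-sum term by at most $\frac{2(n-1)K}{n(n-1)}=\frac{2K}{n}$ ($2(n-1)$ kernel values each move by at most $K$) and perturbs the linear term by at most $\frac{2K}{n}$ (since $\mE[k(\cdot,Y)]\in[0,K]$). Thus each $c_i\lesssim K/n$ and $\sum_{i=1}^n c_i^2=\mathcal{O}(K^2/n)$, so McDiarmid gives $\mathbb{P}(|\mmd_u^2[Y,\pi]-\mE\mmd_u^2[Y,\pi]|>\epsilon)\le 2\exp(-\Theta(n\epsilon^2/K^2))$, with explicit constants that need to be tracked carefully. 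Substituting $\epsilon=c\,\mmd^2[\mu,\pi]$ and $\epsilon=(1-c)\,\mmd^2[\mu,\pi]$ into the two halves of the decomposition and choosing $c$ to balance the two resulting exponents would yield the claimed lower bound $\mmd^4[\mu,\pi]/(9K^2)$ on $\alpha(\delta_\mmd,\mu,\pi)$.

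The main obstacle will be bookkeeping the bounded-differences constant and the McDiarmid exponent tightly enough, and then solving the optimisation over $c\in(0,1)$, so that the denominator $9K^2$ emerges rather than a looser multiple of $K^2$; one may find it convenient to work with the centred-kernel representation $\mmd_u^2[Y,\pi]=\frac{1}{n(n-1)}\sum_{i\neq j} h(y_i,y_j)$ where $h(y,y')=k(y,y')-\mE[k(y,Y)]-\mE[k(y',Y)]+\mE[k(Y,Y')]$, which collects the two pieces affected by a change in $y_i$ into a single term and can sharpen the per-coordinate constant. Once the concentration and the split optimisation are in place, taking $-\frac{1}{n}\log$ of the resulting bound and sending $n\to\infty$ completes the argument, following the same overall template as the proof of Theorem~\ref{thm:KL test}.
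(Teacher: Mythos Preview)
The statement you were asked to prove is McDiarmid's inequality itself (Lemma~\ref{mcdiarmid}), which the paper merely quotes as a classical concentration result and does not prove. Your proposal does not attempt to prove McDiarmid's inequality at all: it is a sketch of how to \emph{apply} McDiarmid's inequality to establish Theorem~\ref{thm:s1withoutref} (the exponential consistency of $\delta_\mmd$). As a proof of the stated lemma there is therefore a basic mismatch---nothing in your write-up addresses the martingale/Doob decomposition and Azuma-type argument that actually underlies Lemma~\ref{mcdiarmid}.

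That said, since what you wrote is effectively an attempt at Theorem~\ref{thm:s1withoutref}, it is worth comparing to the paper's proof of that theorem. The paper does not split the error event into two one-sided deviations and optimise over a parameter $c\in(0,1)$; instead it applies McDiarmid in a single shot to the \emph{difference} $\Delta_k=\mmd_u^2[Y^{(k)},\pi]-\mmd_u^2[Y^{(1)},\pi]$, viewed as a function of the $2n$ samples in $Y^{(1)}$ and $Y^{(k)}$. With $\mE[\Delta_k]=-\mmd^2[\mu,\pi]$ and a per-coordinate bounded-difference constant of $3K/n$, so that $\sum_{i=1}^{2n}c_i^2=18K^2/n$, one application of Lemma~\ref{mcdiarmid} with $\epsilon=\mmd^2[\mu,\pi]$ produces the denominator $9K^2$ immediately. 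Your two-event decomposition with a balancing parameter $c$ is the template borrowed from the proof of Theorem~\ref{thm:KL test}; it would go through, but splitting and then recombining loses a constant factor, and you would have to work harder with the bounded-difference bookkeeping to land on $9K^2$ rather than a looser multiple of $K^2$. The paper's one-shot application to $\Delta_k$ is both shorter and tighter.
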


In order to analyze the probability of error for the test $\delta_{\mmd}$, without loss of generality, we assume that the first sequence is the anomalous sequence generated by the anomalous distribution $\mu$. Hence,
\begin{flalign*}
\max_{i=1,\dots,M}\mathbb{P}_i\{\delta_{\mmd}\neq i\}&=\mathbb{P}_1(\delta_{\mmd}\neq 1)\nn\\
&=\mathbb{P}_1\bigg(\exists k\neq 1: \mmd_u^2[Y^{(k)},\pi]>\mmd_u^2[Y^{(1)},\pi]\bigg)\nn\\
&\leq \sum_{k=2}^M \mathbb{P}_1\bigg(\mmd_u^2[Y^{(k)},\pi]>\mmd_u^2[Y^{(1)},\pi]\bigg).
\end{flalign*}
For $k=1,\ldots,M$, we have,
\begin{flalign*}
  \mmd_u^2[Y^{(k)},\pi]=\frac{1}{n(n-1)}\sum_{\substack{i,j=1\\i\neq j}}^{n,n}k(y^{(k)}_i,y^{(k)}_j)-\frac{2}{n}\sum_{i=1}^n\mathbb{E}_x[k(y^{(k)}_i,x)]+\mathbb E_{x,x'}[k(x,x')],
\end{flalign*}
where $x$ and $x'$ are i.i.d. generated from $\pi$.
We define function $\Delta_k(Y^{(k)})$,
\[\Delta_k(Y^{(k)})=\mmd_u^2[Y^{(k)},\pi]-\mmd_u^2[Y^{(1)},\pi].\]
It can be shown that,
\[\mE \big\{\mmd_u^2[Y^{(1)},\pi] \big\}=\mmd^2[\mu,\pi],\]
and for $k\neq 1$,
\[\mE \big\{\mmd_u^2[Y^{(k)},\pi]\big\}=0.\]

For $1\leq i\leq n$ and $1\leq k\leq M$, $y^{(k)}_i$ affects $\Delta_k$ through the following terms
\begin{flalign*}
  \frac{1}{n(n-1)}\sum_{\substack{j=1\\j\neq i}}^{n}k(y^{(k)}_i,y^{(k)}_j)-\frac{2}{n}\mathbb{E}_x[k(y^{(k)}_i,x)].
\end{flalign*}
We define $Y^{(k)}_{-i}$ as $Y^{(k)}$ with the i-th component $y^{(k)}_i$ being removed.
Hence, for $1\leq k\leq M$ and $1\leq i\leq n$, we have
\begin{flalign*}
|\Delta_k\big(Y^{(k)}_{-i},y^{(k)}_i\big)-\Delta_k\big(Y^{(k)}_{-i},{y^{(k)}_i}'\big)|\leq \frac{3K}{n}.
\end{flalign*}
i.e., $\Delta_k(Y^{(k)})$ satisfies the bounded difference condition in \eqref{bdd}, with $c_i = \frac{3K}{n}$. Hence, by McDiarmid's inequality,
\begin{flalign*}
  \mathbb{P}_1\bigg(\mmd_u^2[Y^{(k)},\pi]>\mmd_u^2[Y^{(1)},\pi]\bigg)
  &=\mathbb{P}_1\bigg(\Delta_k(Y^{(k)}) > 0\bigg)\\
  &=\mathbb{P}_1\bigg(\Delta_k(Y^{(k)})-\mmd^2[\mu,\pi] > -\mmd^2[\mu,\pi]\bigg)  \\
  &\leq\exp\Big(-\frac{n\mmd^4[\mu,\pi]}{{9K^2}}\Big)
\end{flalign*}
%Hence,
%\begin{flalign}
%  \mathbb{P}_e\leq (M-1)\exp\Big(-\frac{n\mmd^4[\mu,\pi]}{{9K^2}}\Big).
%\end{flalign}
And we prove that
\begin{flalign}
  \alpha(\delta_\mmd,\pi,\mu)\geq \frac{\mmd^4[\mu,\pi]}{{9K^2}}.
\end{flalign}
%\section{Discussion}
%\label{sec:discussion}

%
%\subsection{Subheadings}
%\label{ssec:subhead}
%
%
%
%\subsubsection{Sub-subheadings}
%\label{sssec:subsubhead}

% Below is an example of how to insert images. Delete the ``\vspace'' line,
% uncomment the preceding line ``\centerline...'' and replace ``imageX.ps''
% with a suitable PostScript file name.
% -------------------------------------------------------------------------

% To start a new column (but not a new page) and help balance the last-page
% column length use \vfill\pagebreak.
% -------------------------------------------------------------------------
%\vfill
%\pagebreak

%\vfill\pagebreak
\clearpage

% -------------------------------------------------------------------------
\bibliographystyle{IEEEbib}
\bibliography{SequenceDetection}

\end{document}